
\documentclass[10pt,a4paper]{article}
\usepackage{amssymb}
\usepackage{amsmath}
\usepackage{amsthm}
\usepackage{latexsym}
\usepackage[dvips]{epsfig}
\usepackage{mathrsfs}
\usepackage{eufrak}
\usepackage{bm}
\usepackage{stmaryrd}
\usepackage{authblk}

\usepackage{tikz}

\theoremstyle{plain}
\newtheorem{proposition}{Proposition}
\newtheorem{lemma}{Lemma}
\newtheorem{theorem}{Theorem}

\newtheorem*{main}{Theorem}

\setlength{\textwidth}{148mm}           
\setlength{\textheight}{235mm}          
\setlength{\topmargin}{-5mm}            
\setlength{\oddsidemargin}{5mm}         
\setlength{\evensidemargin}{5mm}



\font\tenscr=rsfs10 scaled1100
\font\sevenscr=rsfs7 
\font\fivescr=rsfs5 
\skewchar\tenscr='177
\skewchar\sevenscr='177
\skewchar\fivescr='177
\newfam\scrfam
\textfont\scrfam=\tenscr
\scriptfont\scrfam=\sevenscr
\scriptscriptfont\scrfam=\fivescr


\newcommand{\updn}[3]{#1^{#2}_{\phantom{#2}#3}}

\newcounter{mnote}

\begin{document}


\title{\textbf{A class of conformal curves in the Reissner-Nordstr\"om spacetime}}

\author[,1,2]{Christian L\"ubbe \footnote{E-mail address:{\tt c.luebbe@qmul.ac.uk}, {\tt c.luebbe@ucl.ac.uk}}}
\author[,2]{Juan Antonio Valiente Kroon \footnote{E-mail address:{\tt j.a.valiente-kroon@qmul.ac.uk}}}

\affil[1]{Department of Mathematics, University College London, Gower
  Street, London WC1E 6BT, UK}

\affil[2]{School of Mathematical Sciences, Queen Mary, University of London,
Mile End Road, London E1 4NS, United Kingdom}

\maketitle

\begin{abstract}
A class of curves with special conformal properties (conformal curves)
is studied on the Reissner-Nordstr\"om spacetime. It is shown that
initial data for the conformal curves can be prescribed so that the
resulting congruence of curves extends
smoothly to future and past null infinity. The formation of conjugate
points on these congruences is examined.  The results of this
analysis are expected to be of relevance for the discussion of the
Reissner-Nordstr\"om spacetime as a solution to the conformal field equations
and for the global numerical evaluation of
static black hole spacetimes.
\end{abstract}

PACS: 04.20.Ha, 04.20.Jb, 04.70.Bw

Keywords: black holes, conformal structure

\section{Introduction}

Conformal methods constitute a powerful tool for the discussion of
global properties of spacetimes ---in particular, those representing
black holes. The conformal structure of static electrovacuum black
hole spacetimes is, to some extent, well understood ---see
e.g. \cite{GriPod09,HawEll73}. However, the constructions involved 
often require several changes of variables and the
introduction of some type of null coordinates. This choice of
coordinates may not be the most convenient to undertake an analysis of
global or asymptotic properties of a spacetime by means of the
\emph{conformal Einstein field equations} ---see e.g. the discussion in
\cite{Fri98a}. A key issue in this respect, is 
how to construct in a systematic/canonical fashion a conformal
extension of the spacetime which, in addition, eases the analysis of
the underlying conformal field equations ---for a review of the
conformal equations and the issues involved in their analysis see
e.g. \cite{Fri03a}. In the case of vacuum spacetimes, gauges based on the use of \emph{conformal
geodesics} offer such a systematic approach ---see
e.g. \cite{Fri03c,FriSch87}. Conformal geodesics are invariants of the
conformal structure: a conformal transformation maps  conformal
geodesics into conformal geodesics  ---this is not the case with
standard geodesics unless they are null.

\medskip
One of the main advantages of the
use of conformal geodesics in the construction of gauge (and
coordinate) systems in a vacuum spacetime is that they provide an \emph{a priori} conformal
factor which can be read off directly from the data one has specified
to generate the congruence of conformal geodesics. Hence, one has a
canonical procedure to generate a conformal extension of the spacetime
in question.  In addition, gauge systems based on conformal geodesics
give rise to a fairly straightforward hyperbolic reduction of the
conformal Einstein field equations in which most of the evolution
equations are, in fact, transport equations ---see
e.g. \cite{Fri03a,Fri04}.

\medskip
The useful property of having an \emph{a priori} conformal factor is
lost when one considers conformal geodesics in non-vacuum
spacetimes. Nevertheless, in \cite{LueVal12} it has been shown that
this property can be recovered if one considers a more general class
of curves ---the \emph{conformal curves}. These curves satisfy
equations similar to the conformal geodesic
equations, but with a different coupling to the curvature of the
spacetime. In the vacuum case they coincide with the conformal
geodesic equations. Gauges based on this class of curves have been used in \cite{LueVal12} to
revisit the stability proofs for the Minkowski and the de Sitter
spacetimes first given in \cite{Fri91} and to obtain a stability
result for purely radiative electrovacuum spacetimes. They also have
been used in \cite{LueVal13a} to analyse the geodesic completeness of
non-linear perturbations of Friedman-Robertson-Walker spacetimes with
radiation perfect fluids.

\medskip
Given the results described in the previous paragraph, a natural
question to be raised is  whether conformal geodesics, and more
generally, the class of conformal curves introduced in
\cite{LueVal12} can be used to analyse global aspects of black hole
spacetimes. A first analysis of this question has been carried out in
\cite{Fri03c} where it has been shown that
the maximal extension of the Schwarzschild spacetime, the so-called
Schwarzschild-Kruskal spacetime \cite{Kru60}, can be covered with a congruence of conformal
geodesics which has no conjugate points. The conformal Gaussian gauge
system obtained using this congruence offers a vantage perspective for
the study of conformal properties of the Schwarzschild spacetime and
for its global evaluation by means of numerical methods ---see e.g. \cite{Zen06}.

\medskip
In the present article we analyse to what extent a similar
construction can be performed for the Reissner-Nordstr\"om
spacetime. The idea of considering the Reissner-Nordstr\"om spacetime
is, for several reasons, natural. The inclusion of the electromagnetic
field provides a model of angular momentum ---see
e.g. \cite{Daf03,Daf05}. Moreover, we expect our analysis to provide
insights into more general (i.e. less symmetric) situations
---e.g. the Kerr and Kerr-Newman spacetimes. In addition, there is the
expectation that black hole spacetimes with timelike singularities
could be more tractable from the point of view of the conformal
geometry than black holes with spacelike singularities\footnote{This
expectation is based on the analysis of the structure of spatial
infinity of the Schwarzschild spacetime. In this case the well
understood divergence of the Weyl tensor at spatial infinity can also
be regarded as the timelike singularity of a negative mass
Schwarzschild spacetime ---see e.g. \cite{SchWal83} for a conformal
diagram of this.}.

\medskip
The main results of our analysis is the following: 

\begin{main}
The domain of outer communication of a non-extremal
Reissner-Nordstr\"om spacetime with $q^2\leq \tfrac{8}{9}m^2$ can be
covered with a timelike congruence of conformal curves which contains
no conjugate points. In the extremal case $m^2=q^2$ the
non-existence of conjugate points for an analogous congruence can
be ensured, in
the worst of cases, for the region in the domain of outer communication outside a certain timelike
tube intersecting the horizon. In
both cases, the congruence of conformal curves extends smoothly to
null infinity.
\end{main}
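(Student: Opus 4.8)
The plan is to exploit the large isometry group of the Reissner-Nordstr\"om spacetime to reduce the conformal curve equations to a tractable system of ordinary differential equations, to integrate these using the canonical (a priori) conformal factor, and only then to address the formation of conjugate points through the associated deviation equation. I would write the metric in static coordinates as $\tilde g = -f\,dt^2 + f^{-1}dr^2 + r^2 d\Omega^2$ with $f(r)=1-2m/r+q^2/r^2$, so that the domain of outer communication is $r>r_+$, $r_\pm = m\pm\sqrt{m^2-q^2}$. The first step is to prescribe initial data for the congruence on the time-symmetric slice $t=0$: each curve is taken orthogonal to the slice, and the Weyl one-form $\beta$ together with the conformal factor $\Theta$ are fixed by the canonical prescription of the conformal curve framework. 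Staticity and spherical symmetry then force each curve to remain in a fixed $(t,r)$-plane, and the Killing fields $\partial_t$ and the rotational generators supply first integrals that collapse the full conformal curve system to a small set of ODEs for $r(\tau)$, $t(\tau)$ and the radial components of $\beta$.

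The second step is to integrate this reduced system and extract the conformal factor. The defining advantage of conformal curves is that $\Theta$ is an a priori quadratic polynomial in the parameter, $\Theta(\tau)=\Theta_0+\dot\Theta_0\,\tau+\tfrac12\ddot\Theta_0\,\tau^2$ with $\ddot\Theta_0$ constant along each curve; choosing the data at the moment of time symmetry gives $\dot\Theta_0=0$, and I would verify $\ddot\Theta_0<0$, so that $\Theta$ vanishes at two finite parameter values $\pm\tau_{\scri}$ symmetric about the slice. It then remains to confirm that as $\tau\to\pm\tau_{\scri}$ one has $r\to\infty$, that these loci are precisely future and past null infinity, and that $d\Theta\neq0$ there, so that the conformal boundary is a smooth hypersurface crossed transversally by the congruence. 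This establishes the smooth extension to $\scri^{\pm}$ claimed in the theorem.

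The third and decisive step is the conjugate point analysis, carried out through the Jacobi deviation equation for the congruence. By spherical symmetry the orthogonal deviation vectors split into an angular part and a radial part. The angular Jacobi field has magnitude controlled by the areal radius in the conformal metric, essentially $\Theta\,r$, which tends to a finite nonzero limit at $\scri$ and has no interior zero; hence it produces no conjugate points. The radial Jacobi field obeys a scalar equation of the form $\ddot\eta + K(\tau)\,\eta = 0$ in which the focusing coefficient $K$ is governed by the radial tidal curvature. The crucial observation is that the sign of this coefficient is controlled by $f''(r)=\tfrac{2}{r^4}\bigl(3q^2-2mr\bigr)$, and that the congruence is free of radial conjugate points precisely when $3q^2-2mr<0$ throughout the domain of outer communication, i.e. when $3q^2/(2m)\le r_+$. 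A direct computation shows that $3q^2/(2m)=r_+$ is equivalent to $q^2=\tfrac{8}{9}m^2$ (both equal $\tfrac43 m$ there), so that for $q^2\le\tfrac{8}{9}m^2$ the offending zero of $f''$ lies on or inside the horizon and the radial deviation never focuses. This is exactly the threshold appearing in the statement, and I expect this sign and monotonicity control of the non-autonomous radial Jacobi equation to be the main technical obstacle of the whole argument.

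Finally, the extremal case $m^2=q^2$ must be treated separately, since $r_+=r_-=m$ and the lapse degenerates at the horizon. Here $3q^2/(2m)=\tfrac32 m>r_+=m$, so the band $m<r<\tfrac32 m$ carries the unfavourable sign of $f''$ and radial focusing can no longer be excluded near the horizon. One therefore analyses the radial Jacobi equation only for curves whose trajectories stay outside a suitable timelike tube enclosing this band; for such curves the same sign argument applies and conjugate points are ruled out, which yields the weaker conclusion stated for the extremal case. The smooth extension to $\scri$ goes through exactly as in the non-extremal case, since it depends only on the quadratic structure of the conformal factor and not on the horizon geometry.
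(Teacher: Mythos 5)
Your overall architecture matches the paper's: reduction by symmetry to ODEs in the $(t,r)$-plane, the a priori quadratic conformal factor with $\dot\Theta_*=0$ and $\ddot\Theta_*<0$, a scalar deviation equation, and the correct identification of the threshold $3q^2/(2m)\le r_+ \Leftrightarrow q^2\le\tfrac{8}{9}m^2$ coming from the sign of the curvature coefficient (your $f''$ is $-2\bar\chi$ in the paper's notation, so the sign condition is the same). However, there are two genuine gaps. First, your step two assumes that every curve of the congruence has $r\to\infty$ as $\tau\to\pm\tau_{\mathscr{I}}$ and hits null infinity there. This is false: the behaviour of the curves trichotomizes at $r_\circledast=\tfrac{5}{4}m+\tfrac{1}{4}\sqrt{25m^2-24q^2}$. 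Only curves with $r_*>r_\circledast$ escape to $\mathscr{I}^+$; the critical curve $r_*=r_\circledast$ has constant $\bar r$ and runs into $i^+$; and curves with $r_*\in(r_+,r_\circledast)$ turn around (the quartic $\bar r^{\prime2}=P(\bar r)/\bar r^2$ has a root at $r_*$ which is a maximum) and cross the horizon in finite proper time. This trichotomy is what organizes the entire analysis of individual curves and is needed both for the covering claim near the horizon and for making sense of where $\Theta$ vanishes on each curve; you would discover your verification fails and have to build this structure.

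Second, your strategy for the extremal case does not work. You propose to rule out conjugate points only for curves ``whose trajectories stay outside a suitable timelike tube enclosing the band $m<r<\tfrac{3}{2}m$.'' But in the extremal case $3q^2/(2m)=\tfrac{3}{2}m=r_\circledast$, so \emph{every} curve that reaches the horizon has $\bar r<r_\circledast$ along its whole trajectory and hence lies entirely in the region where the focusing coefficient has the unfavourable sign; no horizon-crossing curve avoids that band, and a pure sign argument gives nothing there. The paper's excluded tube is instead the region swept by curves with $r_*$ \emph{not} close enough to $r_\circledast$: for $r_*\in(r_\star,r_\circledast)$ one proves $\omega>0$ down to the horizon \emph{despite} $\beta^2+\bar\chi$ changing sign, via a first integral of the deviation equation, bounds on the location and height of the maximum of $\omega$, and a bootstrap, exploiting that $\omega$ at its maximum blows up as $r_*\to r_\circledast$. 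Relatedly, the deviation equation is inhomogeneous, $\omega''-(\beta^2+\bar\chi)\omega=\zeta$ with $\zeta=\not{\!\!D}_{\bar{\bm Z}}\beta$ of variable sign, so even in the non-extremal case $q^2\le\tfrac{8}{9}m^2$ the positivity of $\omega$ requires the representation $\omega=\varpi\bigl(\omega_*+(1-\varsigma/\varpi)\zeta\bigr)$ and the estimate $0\le1-\varsigma/\varpi\le2/\beta^2$ rather than a bare comparison for a homogeneous Jacobi equation.
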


A technical version of the main result is given in Theorem
\ref{MainTheoremTechnicalVersion} of the conclusions.

\medskip
Numerical solutions of the conformal curve equations show that
for $\tfrac{8}{9}m^2 < q^2 \leq m^2$ the congruence of conformal
curves contains no conjugate points in the domain of outer
communication. Thus, our main result can be certainly be
improve. Doing this, however, would increase considerably the length
of our analysis. In view of future applications, the extremal case is
certainly the one of the most interest. The same numerical simulations
show that, generically, conjugate points in the congruence form after
the curves have crossed the horizon and entered the black hole region
of the spacetime. From the perspective of the
Cauchy problem for the Einstein field equations, these conjugate
points are not a major concern as one is mainly interested in the
behaviour of the spacetime in the domain of outer communication and at
the horizon. This is, in particular, the case in the problem of the
so-called \emph{non-linear stability of black hole spacetimes} ---see
e.g. \cite{DafRod08}. 

\medskip
 Our
main result provides a suitable conformal gauge to analyse the
properties of the Reissner-Nordstr\"om spacetime by means of the
conformal Einstein field equations. In particular, it opens the
possibility of global numerical evaluations of the spacetime
\cite{Val12} similar to the ones carried out in \cite{Zen06} for the
Schwarzschild spacetime.

\medskip
Finally, we point out that some interesting recent work on other aspects of the Reissner-Nordstr\"om
spacetime can be found in \cite{Are11a,Are11b,DaiDot12,BizFri12}.

\subsection*{Outline of the article}
The present article is structured as follows: in Section
\ref{Section:RNBasics} we present a discussion of the features of the
Reissner-Nordstr\"om spacetime that will be used in our present
analysis. Section \ref{Section:ConformalCurves} provides a
discussion of the properties of the class of conformal curves that
will be used to study the conformal properties of electrovacuum
spacetimes. Section \ref{Section:ExplicitExpressions} particularises the expressions of Section
\ref{Section:ConformalCurves} to the case of the
Reissner-Nordstr\"om, and establishes general properties of the
congruence under consideration. Section \ref{Section:AnalysisConformalCurves} contains the main
results concerning the behaviour of the individual curves of the
congruence. Section \ref{Section:ConformalDeviationEquations} analyses the behaviour of the deviation
equation of the congruence of conformal curves and provides the proof
of the fact that for the congruence under consideration the curves do
not intersect in the domain of outer communication of the black hole
spacetime. Finally, Section \ref{Section:Conclusions} provides some concluding remark to
our analysis. The article contains an appendix in which the behaviour
of conformal geodesics in the Schwarzschild spacetime is analysed in
a way alternative to that of reference \cite{Fri03c}.

\subsection*{Notations and conventions}
In what follows $\mu,\,\nu,\ldots$ will denote spacetime tensorial
indices. The indices $\alpha,\,\beta,\ldots$ are
spatial tensorial indices. The signature
convention for the spacetime metrics is $(+,-,-,-)$. Thus, the
induced metrics on spacelike hypersurfaces are negative definite. The
Latin indices $i,\,j,\dots$ denote spacetime frame indices taking the
values $0,\ldots,3$, while
$a,\,b,\ldots$ correspond to spatial frame ones ranging over $1,\,2,\,3$.

\medskip
An index-free notation will be often used. Given a 1-form ${\bm
\omega}$ and a vector ${\bm v}$, we denote their contraction by
$\langle {\bm \omega},{\bm v}\rangle$. Furthermore, ${\bm
\omega}^\sharp$ and ${\bm v}^\flat$ denote, respectively, the
contravariant version of ${\bm \omega}$ and the covariant version of
${\bm v}$. The metric with respect to which the operation of raising/lowering
indices will be clear by the context. 

\medskip
 In order to ease the
presentation some of the notation used in \cite{Fri03c} for the
various types of coordinates has been modified.

\section{The Reissner-Nordstr\"om spacetimes}
\label{Section:RNBasics}

We begin by recalling that the Einstein-Maxwell field
equations with vanishing Cosmological constant are given by 
\begin{subequations}
\begin{eqnarray}
&& \tilde{R}_{\mu\nu} -\tfrac{1}{2}\tilde{R}\, \tilde{g}_{\mu\nu}  =
\tilde{F}_{\mu\lambda} \updn{\tilde{F}}{\lambda}{\nu}
-\tfrac{1}{4}\tilde{g}_{\mu\nu}
\tilde{F}_{\lambda\rho}\tilde{F}^{\lambda\rho}, \label{EMFE1}\\
&& \tilde{\nabla}^\mu \tilde{F}_{\mu\nu}=0, \label{EMFE2}\\
&& \tilde{\nabla}_{[\mu} \tilde{F}_{\nu\lambda]}=0, \label{EMFE3}
\end{eqnarray}
\end{subequations}
where $\tilde{R}_{\mu\nu}$ denotes the Ricci tensor of the Lorentzian
metric $\tilde{g}_{\mu\nu}$, and $\tilde{F}_{\mu\nu}$ is the Faraday
tensor. In view of \emph{Birkhoff's theorem for electrovacuum spacetimes}
---see e.g. \cite{SKMHH} page 232--- the Reissner-Norstr\"om spacetime is
the only spherically symmetric solution to equations
\eqref{EMFE1}-\eqref{EMFE3}.

\subsection{Basic expressions and coordinates}
In what follows, we briefly discuss various coordinates
representations of the Reissner-Nordstr\"om spacetime that will be
used in the sequel. Further details can be found in
e.g. \cite{GriPod09,HawEll73}.

\subsubsection{Standard coordinates}
In standard \emph{spherical coordinates} $(t,r,\theta,\varphi)$, the
line element and the Faraday tensor of the Reissner-Nordstr\"om
spacetime is given by
\begin{subequations}
\begin{eqnarray}
&& \hspace{-10mm}{\tilde{\bm g}} = \left(
  1-\frac{2m}{r}+\frac{q^2}{r^2} \right) \mbox{\bf
  d}t\otimes \mbox{\bf d}t -\left( 1-\frac{2m}{r}+\frac{q^2}{r^2}
\right)^{-1} \mbox{\bf d} r\otimes \mbox{\bf d} r
- r^2 {\bm\sigma}^2 \label{StandardRN}\\
&& \hspace{-10mm} \tilde{\bm F} =
\frac{q}{2r^2} \mbox{\bf d}t \wedge \mbox{\bf d}r.
\end{eqnarray}
\end{subequations}
where 
\[
{\bm \sigma} \equiv \left(\mbox{\bf d}
  \theta \otimes \mbox{\bf d} \theta + \sin^2 \theta \mbox{\bf
    d}\varphi \otimes \mbox{\bf d} \varphi \right)
\]
is the standard metric of $\mathbb{S}^2$. All throughout it is assumed that 
\[
m>0, \quad m^2\geq q^2,
\]
so that the solution describes a black hole. If $q=0$, the line
element \eqref{StandardRN} reduces to the corresponding one of the
Schwarzschild spacetime. As this case was analysed in detail in
\cite{Fri03c}, we assume $q\neq 0$ unless explicitly stated. The
\emph{extremal case}, $q^2=m^2$, is of particular interest. In that
case the line element reduces to
\[
 {\tilde{\bm g}} = \left(
  1-\frac{m}{r} \right)^2 \mbox{\bf
  d}t\otimes \mbox{\bf d}t -\left( 1-\frac{m}{r}
\right)^{-2} \!\!\mbox{\bf d} r\otimes \mbox{\bf d} r -
r^2{\bm \sigma}.
\]

\medskip
In view of our subsequent discussion we define \footnote{The function
  $D(r)$ corresponds to the function $F(\bar{r})$ of reference
  \cite{Fri03c}. A different notation has been introduced to avoid
  confusion with the Faraday tensor $\tilde{\bm F}$.}
\begin{equation}
D(r)\equiv \left(1-\frac{2m}{r} +\frac{q^2}{r^2} \right) =
\frac{1}{r^2}(r-r_+)(r-r_-), 
\label{DefinitionD}
\end{equation}
where
\[
r_\pm \equiv m \pm \sqrt{m^2-q^2}.
\]
As it is well known, the locus of points in the spacetime for which
$r=r_+$ and $r=r_-$ correspond, respectively,
to the \emph{event horizon} and the \emph{Cauchy horizon}. Notice that
$D(r_\pm)=0$. In the
extremal case we have that $r_\pm =m$ so that the event and
Cauchy horizons coincide.  Notice that in the non-extremal case $0<r_-<r_+$,
and that
\begin{eqnarray*}
D(r) >0 && \mbox{if }   \quad r_-< r_+ <
r, \\
D(r)<0 && \mbox{if }    \quad  r_- < r < r_+, \\
D(r)>0 && \mbox{if }  \quad 0<r < r_- < r_+.
\end{eqnarray*}
However, in the extremal case 
\[
D(r)>0 \quad \mbox{ for } \quad 0<r<m \quad \mbox{ and } \quad m<r.
\]

\subsubsection{Isotropic coordinates}
An isotropic coordinate $\varrho$ can be introduced in the line element
\eqref{StandardRN} via the requirement
\begin{equation}
\frac{\mbox{d}\varrho}{\mbox{d}r} = \frac{\varrho}{r \sqrt{D}}.
\label{IsotropicCoordinatesDE}
\end{equation}
The latter condition implies, for $r>r_+ , \,
r_-$, the coordinate transformation 
\begin{equation}
\varrho= \frac{1}{2}\left(r-m + \sqrt{r^2-2mr+q^2} \right), \qquad r =\frac{1}{4\varrho}(2\varrho+m+q)(2\varrho+m-q),
\label{IsotropicCoordinates}
\end{equation}
so as to obtain the line element
\[
\tilde{\bm g}= \frac{\displaystyle\left(
    1+\frac{q^2-m^2}{4\varrho^2}
  \right)^2}{\displaystyle\left(1+\frac{m+q}{2\varrho}\right)^2\left(1+\frac{m-q}{2\varrho}\right)^2}\mbox{\bf
  d}t \otimes \mbox{\bf d}t
-\left(1+\frac{m+q}{2\varrho}\right)^2 \left(
  1+\frac{m-q}{2\varrho}\right)^2\left( \mbox{\bf d} \varrho \otimes \mbox{\bf d} \varrho + \varrho^2
  {\bm \sigma} \right).
\]
In the extremal case the isotropic coordinate
transformation reduces to
\[
\varrho=r-m, \qquad r=\varrho+m,
\]
and the corresponding line element is given by
\[
\tilde{\bm g} = \left(1+\frac{m}{\varrho}\right)^{-2}\mbox{\bf d}t \otimes
\mbox{\bf d}t -\left( 1+\frac{m}{\varrho} \right)^2\left( \mbox{d}\varrho\otimes
  \mbox{d}\varrho + \varrho^2 {\bm \sigma}\right).
\]
For future reference it is noticed that
\[
D(\varrho) \equiv D(r(\varrho)) = \frac{\displaystyle\left(
    1+\frac{q^2-m^2}{4\varrho^2} \right)^2}{\displaystyle\left(1+\frac{m+q}{2\varrho}\right)^2\left(1+\frac{m-q}{2\varrho}\right)^2},
\]
and that
\[
\qquad \varrho_\pm\equiv \varrho(r_\pm) = \pm\tfrac{1}{2}\sqrt{m^2-q^2}.
\]
In particular, in the extremal case one has
\[
D(\varrho) = \left(1+\frac{m}{\varrho}\right)^{-2}.
\]

\subsubsection{Null coordinates}
\label{Section:NullCoordinates}
Eddington-Finkelstein-like null coordinates can be introduced in the
non-extremal case via
\begin{subequations}
\begin{eqnarray}
&& u= t -\left(r+ \frac{r_+^2}{r_+-r_-}\ln|r-r_+| -
\frac{r_-^2}{r_+-r_-}\ln|r-r_-|\right), \label{NullCoordinates1} \\
&& v = t +\left(r+ \frac{r_+^2}{r_+-r_-}\ln|r-r_+| -
\frac{r_-^2}{r_+-r_-}\ln|r-r_-|\right), \label{NullCoordinates2}
\end{eqnarray}
\end{subequations}
so that one obtains the line elements
\begin{equation}
\tilde{\bm g} =  D(r) \mbox{\bf
  d}u\otimes \mbox{\bf d}u + 2 \mbox{ \bf d} u\otimes \mbox{\bf
  d}r - r^2 {\bm \sigma}, \quad \tilde{\bm g} = D(r) \mbox{\bf
  d}v\otimes \mbox{\bf d}v - 2 \mbox{ \bf d} v\otimes \mbox{\bf
  d}r - r^2 {\bm \sigma}.
\label{EFRN}
\end{equation}
In the extremal case the corresponding change of coordinates leading
to line elements of the form given in \eqref{EFRN} is given by
\begin{equation}
u = t -\left( r -\frac{m^2}{r-m} +2m \ln|r-m| \right), \quad v=t+\left(
  r -\frac{m^2}{r-m} +2m \ln|r-m|  \right).
\label{NullCoordinates3}
\end{equation}

\subsection{Conformal diagrams of the Reissner-Nordstr\"om spacetime}
The conformal Reissner-Nordstr\"om spacetime in both the non-extremal and the
extremal case are well known ---see for example
\cite{GriPod09,HawEll73,Car73a} for details on their
construction. These diagrams are included in Figure
\ref{ConformalDiagramRN} for quick reference. 

\begin{figure}[t]
\centerline{\includegraphics[width=\textwidth]{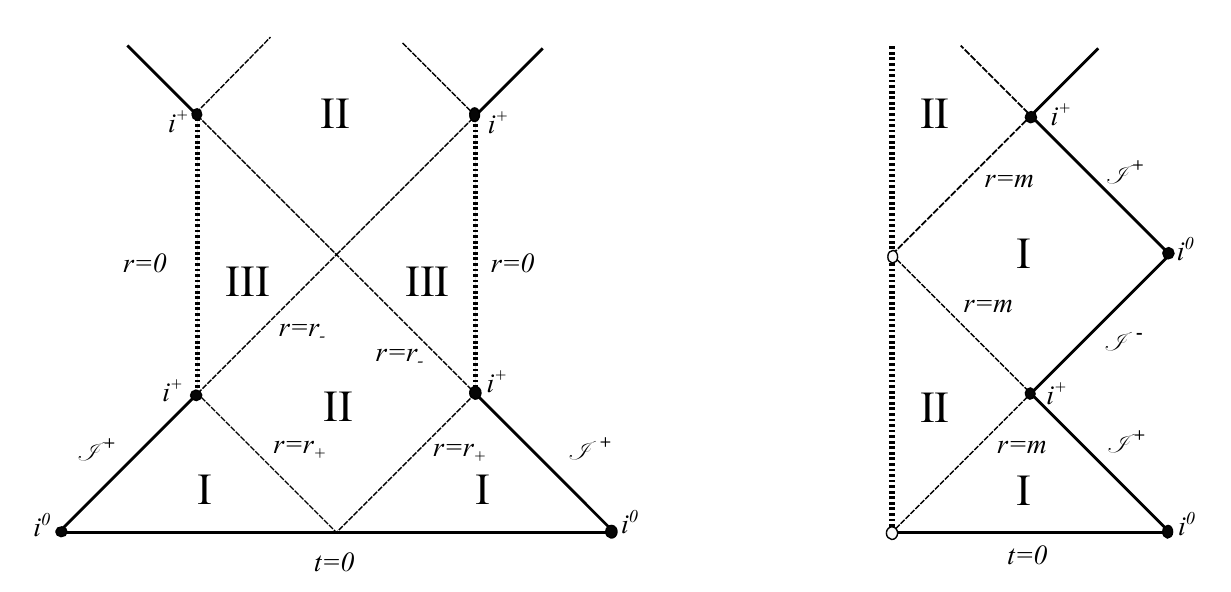}}
\caption{Conformal diagrams of the Reissner-Nordstr\"om
  spacetime: the non-extremal case (left) and the extremal case (right).}
\label{ConformalDiagramRN}
\end{figure}

\subsection{Time symmetric initial data}
\label{Section:PhysicalData}
In the sequel, it will be shown how to cover portions of the
Reissner-Nordstr\"om spacetime by means of a congruence of curves whose initial
data is prescribed on the time symmetric slice $\tilde{\mathcal{S}}=\{t=\mbox{constant}\}$
of the spacetime. The hypersurface $\tilde{\mathcal{S}}$ has the
  topology of $\mathbb{R}^3\setminus\{0\}$. The initial 3-metric
  expressed in isotropic coordinates takes the form
\begin{equation}
\tilde{\bm h}= -\phi^2 \chi^2 \left( \mbox{\bf d}\varrho \otimes
  \mbox{\bf d}\varrho +\varrho^2 {\bm \sigma} \right),
\label{InitialMetricRN}
\end{equation}
with
\[
\phi \equiv \left(1+\frac{m+q}{2\varrho}\right), \quad \chi\equiv
\left( 1+\frac{m-q}{2\varrho}\right), \quad \varrho\in[0,\infty).
\]
If $m^2>q^2$ then the Riemannian 3-dimensional manifold given by
$(\tilde{\mathcal{S}},\tilde{\bm h})$ has two asymptotically Euclidean
ends joined by a throat at $\varrho=\varrho_+$. In the
extremal case one has either $\phi=1$ or $\chi=1$. Accordingly,
$(\tilde{S},\tilde{\bm h})$
has one asymptotically Euclidean end and one
so-called \emph{trumpet-like} end. 

\section{A class of conformal curves}
\label{Section:ConformalCurves}

In what follows, we present a brief discussion of the properties of a
class of \emph{conformal curves} introduced in \cite{LueVal12}. These
curves are a generalisation of the conformal geodesics allowing to
recover, for non-vacuum spacetimes, some of the properties satisfied by
conformal geodesics in a vacuum spacetime.  Our discussion of the
properties of conformal curves is inspired by and follows closely the
discussion of conformal geodesics in vacuum spacetimes given in \cite{Fri95,Fri03c}.

\medskip
Although the notion of conformal curves is applicable to spacetimes
with an arbitrary matter models, in what follows, for concreteness it
is assumed that $(\tilde{\mathcal{M}},\tilde{\bm g})$ denotes a
spacetime satisfying the Einstein-Maxwell field equations \eqref{EMFE1}-\eqref{EMFE3}. Let
$(\mathcal{M},{\bm g})$ denote a conformal extension of
$(\tilde{\mathcal{M}},\tilde{\bm g})$. Hence, there exists a scalar
$\Theta$ such that the metrics $\tilde{\bm g}$ and
${\bm g}$ are related via
\begin{equation}
{\bm g} = \Theta^2 \tilde{\bm g}.
\label{ConformalTransformation:metric}
\end{equation}

\subsection{Basic definitions}
\label{Section:ConformalCurvesBasics}

Given an interval $I\subseteq\mathbb{R}$, let ${\bm x}(\tau)$, $\tau
\in I$ denote a curve in $(\tilde{\mathcal{M}},\tilde{\bm g})$ and let ${\bm
b}(\tau)$ denote a 1-form along ${\bm x}(\tau)$. Furthermore, let
$\dot{\bm x}\equiv \mbox{d}{\bm x}/\mbox{d}\tau$ denote the tangent
vector field of the curve ${\bm x}(\tau)$. The pairing between the
vector $\dot{\bm x}$ and the 1-form $\bm b$ is denoted by
$\langle{\bm b},\dot{\bm x }\rangle$. In \cite{LueVal12} the following
equations for the pair $({\bm x}(\tau),{\bm b}(\tau))$ have been
introduced:
\begin{subequations}
\begin{eqnarray}
&& \tilde{\nabla}_{\dot{\bm x}} \dot{\bm x} = -2 \langle {\bm b},
\dot{\bm x} \rangle
\dot{\bm x} + \tilde{\bm g}(\dot{\bm x},\dot{\bm x}) {\bm b}^\sharp, \label{ConformalCurve1} \\
&& \tilde{\nabla}_{\dot{\bm x}} {\bm b} = \langle {\bm b}, \dot{\bm x}
\rangle {\bm b} - \tfrac{1}{2} \tilde{\bm g}^\sharp ({\bm b},{\bm b})
\dot{\bm x}^\flat + \tilde{\bm H}(\dot{\bm x}, \cdot), \label{ConformalCurve2}
\end{eqnarray}
\end{subequations}
where $\tilde{\nabla}_{\dot{x}}$ denotes the directional derivative of
the Levi-Civita connection of the metric $\tilde{\bm g}$, while 
$\tilde{\bm H}$ denotes a rank 2 covariant tensor which upon the
conformal transformation \eqref{ConformalTransformation:metric}
transforms as:
\[
\tilde{H}_{\mu\nu}-H_{\mu\nu} = \nabla_\mu \Upsilon_\nu
+\Upsilon_\mu\Upsilon_\nu -\tfrac{1}{2}
g^{\lambda\rho}\Upsilon_\lambda \Upsilon_\rho \, g_{\mu\nu}, \qquad
\Upsilon_\mu \equiv \Theta^{-1}\nabla_\mu \Theta.
\]
This transformation law is formally identical to that of the
\emph{Schouten tensor} 
\[
\tilde{L}_{\mu\nu} \equiv\tfrac{1}{2}(
\tilde{R}_{\mu\nu} - \tfrac{1}{6} \tilde{R}\, \tilde{g}_{\mu\nu}).
\] 
The equations \eqref{ConformalCurve1}-\eqref{ConformalCurve2} will be
referred to as the \emph{conformal curve equations}. In a slight abuse
of notation, the pair $({\bm x}(\tau),{\bm b}(\tau))$ will be called a
\emph{conformal curve}. Following \cite{LueVal12} we set 
\begin{equation}
\tilde{\bm H}=0,
\label{ChoiceH}
\end{equation}
has been adopted so that equation \eqref{ConformalCurve2} reduces to
\begin{equation}
\tilde{\nabla}_{\dot{\bm x}} {\bm b} = \langle {\bm b}, \dot{\bm x}
\rangle {\bm b} - \tfrac{1}{2} \tilde{\bm g}^\sharp ({\bm b},{\bm b})
\dot{\bm x}^\flat. \label{ConformalCurve2ALT}
\end{equation}
\emph{In what follows, the choice \eqref{ChoiceH} will be assumed.}

\medskip
The choice given by equation \eqref{ChoiceH}  leads to
an explicit expression for the conformal factor $\Theta$ in terms of
the parameter $\tau$. Indeed, by requiring ${\bm x}(\tau)$ to be
timelike and imposing the normalisation condition
\begin{equation}
{\bm g}(\dot{\bm x},\dot{\bm x}) =1,
\label{Normalisation:Unphysical}
\end{equation}
the conformal curve equations imply 
\[
\dot{\Theta} = \langle {\bm b}, \dot{\bm x}\rangle \Theta, \qquad
\ddot{\Theta} = \tfrac{1}{2} \tilde{\bm g}^\sharp({\bm b},{\bm b})
\Theta^{-1}, \qquad \dddot{\Theta}=0,
\]
where $\dot{\Theta} \equiv \tilde{\nabla}_{\dot{x}} \Theta$,
etc. Integrating the last of these equations one finds 
\begin{equation}
\Theta = \Theta_* + \dot{\Theta}_* (\tau-\tau_*) +
\tfrac{1}{2}\ddot{\Theta}_*(\tau-\tau_*)^2, \label{ConformalFactor}
\end{equation}
where $\Theta_*$, $\dot{\Theta}_*$ and $\ddot{\Theta}_*$ are
prescribed at a fiduciary value $\tau_*$ of the parameter
$\tau$. The coefficients $\dot{\Theta}_*$ and $\ddot{\Theta}_*$
satisfy the constraints
\begin{equation}
\dot{\Theta}_* = \langle {\bm b}_*, \dot{\bm x}_* \rangle \Theta_*, \qquad 2
\Theta_* \ddot{\Theta}_* = \tilde{\bm g}({\bm b}_*,{\bm b}_*),
\label{Constraints}
\end{equation}
where ${\bm b}_*$ and $\dot{\bm x}_*$ denote, respectively, the value
of ${\bm b}$ and $\dot{\bm x}$ at $\tau=\tau_*$. 

\medskip
Finally, let ${\bm e}_i$, $i=0,\ldots,3$, denote a frame basis along
${\bm x}(\tau)$.  The frame will be said to be \emph{Weyl
propagated} along the conformal curve $({\bm x}(\tau),{\bm b}(\tau))$ if
it satisfies the equation
\begin{equation}
\tilde{\nabla}_{\dot{x}}  {\bm e}_i = -\langle {\bm b}, {\bm e}_i
\rangle\dot{\bm x} -\langle {\bm b},{\bm e}_i \rangle {\bm e}_i +
\tilde{g}({\bm e}_i,\dot{\bm x}){\bm b}^\sharp. 
\end{equation}
It can be readily seen that a Weyl propagated frame which is
${\bm g}$-orthonormal at, say, $\tau=\tau_*$ remains ${\bm g}$-orthonormal all through
${\bm x}(\tau)$ ---that
is, ${\bm g}({\bm e}_i,{\bm e}_j)=\eta_{ij}$. Following the discussion
of \cite{Fri03c,LueVal12},
If, consistently with equation \eqref{Normalisation:Unphysical}, one sets ${\bm
  e}_0=\dot{\bm x}$ then it can be shown that
\[
b_0 =\Theta^{-1} \dot{\Theta}, \qquad b_a = \langle \Theta^{-1} {\bm
  b}, {\bm e}_a \rangle_*, \qquad a=1,\,2,\,3.
\]
As a consequence, the components of the 1-form $\bm b$ with respect to
the frame ${\bm e}_i$ can be expressed in terms of the value of
various fields at $\tau=\tau_*$. Hence, the 1-form ${\bm b}$, like $\Theta$, is
known \emph{a priori}. For full details of the computations involved
see e.g. \cite{Fri03c,LueVal12}.

\medskip
\noindent
\textbf{Remark 1.}
Notice that by virtue of the normalisation condition
\eqref{Normalisation:Unphysical}, $\tau$ is the ${\bm g}$-proper time
of the conformal curve. We shall often refer to $\tau$ as the
\emph{unphysical proper time}. 

\medskip
\noindent
\textbf{Remark 2.} With the choice $\tilde{\bm H}=\tilde{\bm L}$, the
equations \eqref{ConformalCurve1} and \eqref{ConformalCurve2} yield the
so-called \emph{conformal geodesic equations} ---see
e.g. \cite{FriSch87,Fri95,Fri03c}. Notice, that for a vacuum spacetime
$\tilde{\bm L}=0$ and the conformal geodesic equations are formally
identical to equations
\eqref{ConformalCurve1}-\eqref{ConformalCurve2ALT}. In this case, it was shown in
Lemma 3.1 of \cite{Fri95} that a quadratic expression for the
conformal factor identical to \eqref{ConformalFactor} can be
obtained. Finally, it is observed that in the case of an
electrovacuum spacetime one has that $\tilde{\bm L}\neq 0$ and the
argument leading to Lemma 3.1 in \cite{Fri95} no longer holds. The
desire of retaining the expression \eqref{ConformalFactor} is what led
in \cite{LueVal12} to the notion of conformal curves.

\subsection{The $\tilde{\bm g}$-adapted equations}
\label{Section:PhysicalMetricAdaptedEquations}

As already mentioned, as a consequence of the normalisation condition
\eqref{Normalisation:Unphysical}, the parameter $\tau$ is the unphysical proper
time of the curve ${\bm x}(\tau)$. In some computations it is more
convenient to consider a parametrisation in terms of \emph{physical}
proper time $\bar{\tau}$. The parameter transformation
\begin{equation}
\bar{\tau} = \bar{\tau}_* + \int_{\tau_*}^{\tau} \frac{\mbox{d} s}{\Theta(s)},
\label{PhysicalTime}
\end{equation}
with inverse $\tau=\tau(\bar{\tau})$. In what follows, we will write
$\bar{\bm x}\equiv {\bm x}(\tau(\bar{\tau}))$. It can then
be verified
that
\[
\bar{\bm x}'
\equiv \partial_{\bar{\tau}} \bar{\bm x} = \Theta \dot{\bm x},
\]
and that $\tilde{\bm g}(\bar{\bm x}',\bar{\bm x}')=1$. Hence, 
$\bar{\tau}$ is, indeed, the $\tilde{\bm g}$-proper time of the curve
$\bar{\bm x}$.  

\medskip
Now, in order to write the equation for the curve $\bar{\bm
  x}(\bar{\tau})$ in a convenient way, we consider the split
\[
{\bm b}=\hat{\bm b} + \varpi \dot{\bm x}^\flat,
\]
where the 1-form $\hat{\bm b}$ satisfies
\[
\langle \hat{\bm b}, \dot{\bm x}\rangle =0, \qquad \varpi =
\frac{\langle {\bm b},
  \dot{\bm x}\rangle}{\tilde{\bm g}(\dot{\bm x},\dot{\bm x})}, \qquad {\bm
  g}^\sharp({\bm b},{\bm b})
=\langle {\bm b},\dot{\bm x}\rangle^2 + {\bm g}^\sharp(\hat{\bm
  b},\hat{\bm b}),
\] 
and the indices of the vectors and forms have been moved using the
metric $\tilde{\bm g}$. In terms of these
objects the $\tilde{\bm g}$-\emph{adapted equations for the conformal curves} are
given by
\begin{subequations}
\begin{eqnarray}
&& \tilde{\nabla}_{\bar{\bm x}'} \bar{\bm x}' = \hat{\bm b}^\sharp, \label{PhysicalMetricAdaptedCC1}\\
&& \tilde{\nabla}_{\bar{\bm x}'} \hat{\bm b} = \beta^2 \bar{\bm
  x}'{}^\flat, \label{PhysicalMetricAdaptedCC2}
\end{eqnarray}
\end{subequations}
where 
\begin{equation}
\beta^2 \equiv -\tilde{\bm g}^\sharp(\hat{\bm b},\hat{\bm b}) = \delta^{ab} d_a d_b = \mbox{constant}
\label{Definition:beta}
\end{equation}
is, by virtue of the discussion of Section
\ref{Section:ConformalCurvesBasics}, a constant along the conformal
curve. Finally, it is worth noticing that as a consequence of equation
\eqref{PhysicalMetricAdaptedCC1}, $\hat{\bm b}^\sharp$ can be
interpreted as the \emph{physical acceleration} of the conformal
curve.

\subsection{The deviation equations}
\label{Section:ConformalDeviationEquations}

A crucial part of the analysis of the present article will be concerned
with the question of whether a
congruence of conformal curves develops conjugate points or not. To this end,
let $({\bm x}(\bar{\tau},\lambda),{\bm b}(\bar{\tau},\lambda))$ denote a family of
conformal curves depending smoothly on a parameter $\lambda$. Following \cite{Fri03c}, let
\[
\bar{\bm X} \equiv \bar{\bm x}', \qquad \bar{\bm Z}
\equiv \partial_\lambda \bar{\bm x},
\qquad \hat{\bm B} \equiv \tilde{\nabla}_{\bar{\bm Z}} \hat{\bm b}.
\]
One then has that
\begin{subequations}
\begin{eqnarray}
&& \tilde{\nabla}_{\bar{\bm X}} \tilde{\nabla}_{\bar{\bm X}} \bar{\bm Z} =
\tilde{\bm R}(\bar{\bm X},\bar{\bm Z})\bar{\bm X} +
\hat{\bm B}^\sharp, \label{PhysicalMetricAdaptedDevEqns1} \\
&& \tilde{\nabla}_{\bar{\bm X}} \hat{\bm B} = -\hat{\bm b} \cdot \tilde{\bm
  R}(\bar{\bm X},\bar{\bm Z})+
\tilde{\nabla}_{\bar{\bm Z}} \big(\tilde{\bm g}^\sharp(\hat{\bm b},\hat{\bm
    b})\big) \bar{\bm X}^\flat +
\tilde{\bm g}^\sharp(\hat{\bm b},\hat{\bm b}) \tilde{\nabla}_{\bar{\bm
    X}} \bar{\bm Z}^\flat. \label{PhysicalMetricAdaptedDevEqns2}
\end{eqnarray}
\end{subequations}
A computation then shows that
\begin{equation}
\tilde{\nabla}_{\bar{\bm X}} \tilde{\nabla}_{\bar{\bm Z}}
\big(\tilde{\bm g}^\sharp(\hat{\bm b},\hat{\bm b})\big) =
\tilde{\nabla}_{\bar{\bm Z}} \tilde{\nabla}_{\bar{\bm X}}
\big(\tilde{\bm g}^\sharp(\hat{\bm b},\hat{\bm b})\big)=0,
\label{ConstantsAlongCurve}
\end{equation}
as a consequence of equation \eqref{Definition:beta}. Hence, one concludes
that the coefficients $\tilde{\nabla}_{\bar{\bm Z}} \big(\tilde{\bm g}^\sharp(\hat{\bm b},\hat{\bm
    b})\big)$ and $\tilde{\bm g}^\sharp(\hat{\bm b},\hat{\bm b})$ are
  constant along a conformal curve. For
simplicity one can evaluate them at $\tau=\tau_*$.

\subsection{Formulae in warped product spaces}

The Reissner-Nordstr\"om spacetime in the standard coordinates
$(t,r,\theta,\varphi)$ of the line element of equation \eqref{StandardRN}
is in the form of a warped product. This structure can be exploited to
simplify the analysis of the $\tilde{\bm g}$-adapted conformal curve
equations
\eqref{PhysicalMetricAdaptedCC1}-\eqref{PhysicalMetricAdaptedCC2}. In
this section, we adapt the discussion of \cite{Fri03c} to the context
of the conformal curves. 

\medskip
In what follows, we will consider spacetimes whose metric can be
written as a warped product of the form
\begin{equation}
\tilde{\bm g}  = l_{AB}
\mbox{\bf d}x^A \otimes \mbox{\bf d} x^B + f^2 k_{cd} \mbox{\bf
  d}x^c\otimes  \mbox{\bf d}x^d,
\label{WarpedProductMetric}
\end{equation}
with 
\[
l_{AB} = l_{AB} (x^C), \quad k_{ab} = k_{ab}(x^c), \quad f=f(x^A)>0,
\]
and $A,\, B, \,C=0,\,1$ and $a,\, b,\, c=2,\,3$. In addition, it is
assumed that  the 2-dimensional metric given by the line element ${\bm
  l}\equiv l_{AB} \mbox{\bf d}x^A \otimes \mbox{\bf d} x^B$ is
Lorentzian, while the one given by ${\bm k} =k_{cd} \mbox{\bf
  d}x^c\otimes  \mbox{\bf d}x^d$ is a negative-definite Riemannian
one. \emph{In view of this
structure it is natural to consider solutions to the conformal curve
equations satisfying $\dot{x}^a=0$ and ${b}_c=0$.} In the context of
the spherically symmetric Reissner-Nordstr\"om spacetime this Ansatz
leads to solutions to the conformal curves whose angular coordinates
are constant. One only has to consider evolution
equations for the coordinates $(r,t)$. A direct
computation shows that for this type of conformal curves the
$\tilde{\bm g}$-adapted equations for the conformal curves imply
\begin{subequations}
\begin{eqnarray}
&& \not{\!\!D}_{\bar{\bm x}'} \bar{\bm x}' = \hat{\bm b}^\sharp, \label{WarpedProductConformalCurveEquations1}\\
&& \hat{\bm b} = \pm \beta {\bm\epsilon}_{\bm l}(\bar{\bm
  x}',\,\cdot\,), \label{WarpedProductConformalCurveEquations2}
\end{eqnarray}
\end{subequations}
with 
\[
{\bm \epsilon}_{\bm l} \equiv \sqrt{|\Delta|} \mbox{\bf d}x^0 \wedge
\mbox{\bf d}x^1, \quad
\Delta\equiv \det l_{AB},
\]
and where $\not{\!\!D}$ denotes the Levi-Civita covariant derivative
of ${\bm l}$ and $\beta^2$ given by equation
\eqref{Definition:beta}. The sign in equation
\eqref{WarpedProductConformalCurveEquations2} is determined
consistently with the initial conditions. 
A further computation shows that under the present Ansatz, the
$\tilde{\bm g}$-adapted deviation equations
\eqref{PhysicalMetricAdaptedDevEqns1}-\eqref{PhysicalMetricAdaptedDevEqns2}
are equivalent to each other and, in turn, to the equation
\begin{equation}
\not{\!\!D}_{\bar{\bm X}} \not{\!\!D}_{\bar{\bm X}} \bar{\bm Z} = \tfrac{1}{2} R[{\bm l}]\, {\bm
  \epsilon}_{\bm l}(\bar{\bm X},\bar{\bm Z}) {\bm \epsilon}_{\bm l}(\bar{\bm X},
  \,\cdot\,) ^\sharp \pm \left( \not{\!\!D}_{\bar{\bm Z}} \beta\, {\bm \epsilon}_{\bm l}
  (\bar{\bm X},\, \cdot\,)^\sharp + \beta {\bm \epsilon}_{\bm l}
  (\not{\!\!D}_{\bar{\bm X}} \bar{\bm
  Z},\,\cdot \,)^\sharp\right),
\label{WarpProductDeviationEquation}
\end{equation}
where $R[{\bm l}]$ denotes the Ricci scalar of ${\bm l}$.  The sign in
the last equation is chosen consistently with that of equation
\eqref{WarpedProductConformalCurveEquations2}. 

\medskip
For conformal curves satisfying $\dot{x}^a=0$ and ${b}_c=0$, the
issue of whether the deviation vector field $\bar{\bm Z}$
degenerates can be rephrased in terms of the question of the vanishing
of the scalar 
\begin{equation}
\omega \equiv {\bm \epsilon}_{\bm l}(\bar{\bm X},\bar{\bm Z}).
\label{Definition:omega}
\end{equation}
Notice that as long as $\omega\neq 0$, $\bar{\bm X}$ and $\bar{\bm Z}$
are linearly independent. A computation using
\eqref{WarpProductDeviationEquation} yields
\begin{equation}
\not{\!\!D}_{\bar{\bm X}} \not{\!\!D}_{\bar{\bm X}}\omega
  = \left( \beta^2 + \tfrac{1}{2}R[{\bm l}]\right)\omega + \not{\!\!D}_{\bar{\bm Z}} \beta.
\label{ReducedWarpProductDeviationEquation}
\end{equation}

\section{Basic expressions for the congruence of conformal curves in
  the Reissner-Nordstr\"om spacetime}
\label{Section:ExplicitExpressions}

In the present section we particularise the discussion made in Section
\ref{Section:ConformalCurves} to a specific class of
conformal curves in the Reissner-Nordstr\"om spacetime.

\subsection{Initial data for the congruence}
\label{Section:InitialDataCongruence}

It follows from the discussion in Section
\ref{Section:ConformalCurvesBasics} that basic pieces of information
to be prescribed in order to construct a congruence of conformal
curves are the initial value of the conformal factor, $\Theta_*$, and
the initial value of the 1-form ${\bm b}_*$ at some initial
hypersurface. Following the discussion of Section
\ref{Section:PhysicalData} we consider the time symmetric slice of the
Reissner-Nordstr\"om spacetime. By analogy to the discussion
in \cite{Fri03c}, and taking into account the 3-metric of equation \eqref{InitialMetricRN}, we choose
\begin{eqnarray*}
&& \Theta_* = \phi^{-1} \chi^{-1} = \frac{1}{r_*^2}= \frac{\varrho^4_*}{\left(\varrho_*+
    \displaystyle\frac{m+q}{2}\right)^2\left(\varrho_*+\displaystyle\frac{m-q}{2}\right)^2},
\\
&& {\bm b}_* =\hat{\bm b}_* = \Theta^{-1}_* \mbox{d}\Theta_* =
-\frac{2}{r_*} \mbox{\bf d}r_*= -
\frac{2\left(\varrho_*^2 -\displaystyle\frac{1}{4}(m^2-q^2)\right)}{\varrho_*\left( \varrho_*+
    \displaystyle\frac{m+q}{2}\right)\left( \varrho_*+\displaystyle\frac{m-q}{2}\right)}\mbox{d}\varrho_*,
\end{eqnarray*}
where it is assumed that $r_*> r_+$. The symbols $r_*$ and $\varrho_*$
are used to denote, respectively, the radial coordinates $r$ and
$\rho$ on the time symmetric slice. For simplicity we also set
$\tau_*=0$. In addition, one has that
\[
\beta^2 \equiv -\tilde{\bm g}^\sharp(\hat{\bm b}_*,\hat{\bm b}_*) =
\frac{4}{r^2_*}D_* 
\]
where $D_*\equiv D(r_*)$ and $D(r)$ is given by equation
\eqref{DefinitionD}. If, moreover, one assumes the condition
\[
\langle {\bm b}, \dot{\bm x} \rangle_*=0,
\]
then the expression \eqref{ConformalFactor} for the conformal factor,
together with the constraints \eqref{Constraints} imply
\begin{equation}
\Theta = D_* \left( \left(\frac{2\Theta_*}{\beta} \right)^2-\tau^2 \right),
\label{ConformalFactorReduced}
\end{equation}
where the coefficients $D_*$, $\Theta_*$ and $\beta$ in this expression
are taken to be constant along a given conformal curve. Moreover, the
conformal curves of the congruence generated by the above conditions
can be parametrised by the value of their radial coordinate on the
initial hypersurface, $r_*$ (or $\varrho_*$). Using expression
\eqref{ConformalFactorReduced} in formula \eqref{PhysicalTime} one finds that the parameters $\tau$ and
$\bar{\tau}$ are related to each other by
\begin{equation}
\bar{\tau}= \frac{1}{\beta}\ln\left(\frac{2\Theta_* +\beta \tau}{2\Theta_*-\beta \tau}\right).
\label{PhysicalToUnphysicalProperTime}
\end{equation}
The inverse relation giving $\tau$ in terms of $\bar{\tau}$ is given
by
\begin{equation}
\tau = \frac{2\Theta_*}{\beta} \tanh\left( \tfrac{1}{2}\beta \bar{\tau} \right).
\label{UnphysicalToPhysicalProperTime}
\end{equation}

In terms of the physical proper time, $\bar{\tau}$, the expression
\eqref{ConformalFactorReduced} takes the form
\begin{equation}
\Theta = \frac{\Theta_*}{\cosh^2\left(
    \tfrac{1}{2}\beta\bar{\tau}\right)}.
\label{ConformalFactorReducedPhysical}
\end{equation}

\subsection{The conformal curve equations for the standard coordinates}

In order to write the conformal curve equations
\eqref{WarpedProductConformalCurveEquations1}-\eqref{WarpedProductConformalCurveEquations2}
for the Reissner-Nordstr\"om metric, it is noticed that
the metric ${\bm l}$ in the warped product line element
\eqref{WarpedProductMetric} is given by 
\[
{\bm l} = D(r) \mbox{\bf d}t \otimes \mbox{\bf d}t -D^{-1}(r)
\mbox{\bf d}r \otimes \mbox{\bf d}r.  
\]
Equations 
\eqref{WarpedProductConformalCurveEquations1}-\eqref{WarpedProductConformalCurveEquations2} imply
\begin{subequations}
\begin{eqnarray}
&& \bar{t}'' + \frac{\partial_{\bar{r}}D(\bar{r})}{D(\bar{r})}\bar{r}'\bar{t}' = \frac{1}{D(\bar{r})}\beta\, \bar{r}', \label{StandardCGEqn1} \\
&& \bar{r}'' - \frac{\partial_{\bar{r}}D(\bar{r})}{2D(\bar{r})}\bar{r}^{\prime 2} +
\frac{D(\bar{r}) \partial_{\bar{r}}D(\bar{r})}{2}\bar{t}^{\prime 2} = D(\bar{r})\beta\, \bar{t}', \label{StandardCGEqn2}
\end{eqnarray}
\end{subequations}
where consistent with the notation of section
\ref{Section:PhysicalMetricAdaptedEquations} we have set
$\bar{r}\equiv r(\bar{\tau})$, $\bar{t}\equiv t(\bar{\tau})$. 
Initial data for these equations is prescribed by observing the
discussion of Section \ref{Section:InitialDataCongruence}, and by
requiring $\dot{\bm x}$  to be given initially by the unit normal to
$\tilde{\mathcal{S}}$.  It follows that
\begin{equation}
t_*=0, \quad r_*>r_+, \quad \bar{t}'_* = \frac{1}{\sqrt{D_*}}, \quad
\bar{r}'_*=0, \quad (\hat{b}_t)_*=0, \quad (\hat{b}_r)_* =-\frac{2}{r_*},
\label{CGInitialData}
\end{equation}
where $\hat{\bm b}_t \equiv \langle \hat{\bm b},{\bm \partial}_t
\rangle$, $\hat{\bm b}_r \equiv \langle \hat{\bm b},{\bm \partial}_r
\rangle$. Notice that $r_*=\bar{r}_*$, $t_*=\bar{t}_*$. As a
consequence of the symmetry of the hypersurface $\tilde{\mathcal{S}}$
with respect to the bifurcation sphere at $r_*=r_+$, it is only
necessary to consider the case $ r_*>r_+$. The equations
\eqref{StandardCGEqn1}-\eqref{StandardCGEqn2} can be decoupled by
making use of the $\tilde{\bm g}$-normalisation condition
\begin{equation}
D(\bar{r})\, \bar{t}^{\prime 2} - \frac{1}{D(\bar{r})} \bar{r}^{\prime 2} =1.
\label{PhysicalNormalisation}
\end{equation}
Solving the latter for $t'\geq0$ and substituting into \eqref{StandardCGEqn2}, one obtains that
\begin{equation}
\bar{r}'' + \tfrac{1}{2} \partial_{\bar{r}} D(\bar{r}) - \beta \sqrt{D(\bar{r}) +
  \bar{r}^{\prime 2}}=0.
\label{rprimeprime}
\end{equation}
This equation can be integrated once to yield
\[
\sqrt{D(\bar{r}) + \bar{r}^{\prime 2}} -\beta \bar{r} = \gamma,
\]
where $\gamma$ is a constant given in terms of the initial data by
\[
\gamma = -\sqrt{D_*}.
\]
It follows that 
\begin{equation}
\label{ReducedEquation}
\bar{r}' = \pm \sqrt{(\gamma+\beta \bar{r})^2 - D(\bar{r})},
\end{equation}
with the sign depending on the value of $r_*$.

\subsection{Expressions for the conformal curve equations in null
  coordinates}
In order to discuss the behaviour of the conformal curves through null
infinity and the horizon needs to consider the conformal curve
equations
\eqref{WarpedProductConformalCurveEquations1}-\eqref{WarpedProductConformalCurveEquations2}
written in terms of the Eddington-Finkelstein null coordinates of
Section \ref{Section:NullCoordinates}. A computation renders the pairs
of equations
\begin{eqnarray*}
&& \bar{u}'' -\tfrac{1}{2} \partial_{\bar{r}} D(\bar{r}) \bar{u}^{\prime 2} = -\beta \bar{u}', \\
&& \bar{r}'' + \tfrac{1}{2} D(\bar{r}) \partial_{\bar{r}} D(\bar{r}) \, \bar{u}^{\prime 2} + \partial_{\bar{r}} D(\bar{r}) \,
\bar{r}' \bar{u}' = \beta (\bar{r}' + D(\bar{r}) \bar{u}'),
\end{eqnarray*}
and
\begin{eqnarray*}
&& \bar{v}'' + \tfrac{1}{2} \partial_{\bar{r}} D(\bar{r}) \bar{v}^{\prime 2} = \beta \bar{v}', \\
&& \bar{r}'' + \tfrac{1}{2} D(\bar{r}) \partial_{\bar{r}} D(\bar{r})\, \bar{v}^{\prime 2} - \partial_{\bar{r}} D(\bar{r})\, \bar{r}'\bar{v}'
= -\beta (\bar{r}' -D(\bar{r}) \bar{v}'),
\end{eqnarray*}
where $\bar{u}=u(\bar{\tau})$ and $\bar{v}=v(\bar{\tau})$ ---cf. \cite{Fri03c}.
As in the case of the standard Reissner-Nordstr\"om coordinates, the
above expressions can be decoupled using the $\tilde{\bm g}$-normalisation
conditions
\[
D(\bar{r})\, \bar{u}^{\prime 2} + 2 \bar{u}' \bar{r}'=1, \qquad D(\bar{r})\, \bar{v}^{\prime 2} - 2 \bar{v}' \bar{r}' =1, 
\]
which, in turn, yield
\begin{subequations}
\begin{eqnarray}
&& \bar{u}' = \frac{1}{\sqrt{D(\bar{r}) + \bar{r}^{\prime 2}}
  +\bar{r}'}, \label{NullReducedA}\\
&& \bar{v}' = \frac{1}{D(\bar{r})} \left(  \sqrt{D(\bar{r}) + \bar{r}^{\prime 2}} +\bar{r}' \right).
\label{NullReducedB}
\end{eqnarray}
\end{subequations}

In the calculations leading to the above equations, it has been
required that $u'>0$ and $v'>0$ at the initial hypersurface so as to
have future oriented curves. The
initial conditions for these equations are given, for $m^2>q^2$ and $r_*>r_+$, by
\begin{equation}
u_*= -r_*-
\frac{1}{r_+-r_-}\ln\frac{(r_*-r_+)^{r_+^2}}{(r_*-r_-)^{r_-^2}},
\qquad v_* =
r_*+\frac{1}{r_+-r_-}\ln\frac{(r_*-r_+)^{r_+^2}}{(r_*-r_-)^{r_-^2}},
\label{NullCoordinatesDataNonExtremal}
\end{equation}
and in the case $m^2=q^2$ with $r>m$ by
\begin{equation}
u_* = - r_* +\frac{m^2}{r_*-m} -2m \ln(r_*-m), \qquad v_*=
  r_* -\frac{m^2}{r_*-m} +2m \ln(r_*-m) 
\label{NullCoordinatesDataExtremal}
\end{equation}
---see equations \eqref{NullCoordinates1}-\eqref{NullCoordinates2} and
\eqref{NullCoordinates3}. 

\medskip
\noindent
\textbf{Remark.} Once the function $r(\bar{\tau})$ has been
determined, the expressions \eqref{NullReducedA}-\eqref{NullReducedB} allow to obtain
$u(\bar{\tau})$ and $v(\bar{\tau})$ by integration.

\subsection{The polynomials $P(\bar{r})$ and $Q(\bar{r})$}
\label{Section:PolynomialP}

Equation \eqref{ReducedEquation} can be written in the form
\begin{equation}
\bar{r}^{\prime 2} = \frac{1}{\bar{r}^2} P(\bar{r}),
\label{ReducedSqrt}
\end{equation}
where $P(\bar{r})$ is the quartic polynomial 
\[
P(\bar{r}) \equiv \bar{r}^2 \left((\gamma + \beta \bar{r})^2 -D(\bar{r})\right).
\]
Since $\bar{r}'_*=0$, it can be readily verified that $\bar{r}=r_*$ is
a root of $P(\bar{r})$. Hence, we write
\[
P(\bar{r}) = \beta^2 (\bar{r}-r_*)Q(\bar{r}),
\]
with 
\[
Q(\bar{r}) \equiv \bar{r}^3 + \eta \bar{r} + \xi, 
\]
where
\[
\eta \equiv  \frac{1}{\beta^2} (D_*-1), \qquad \xi \equiv \frac{q^2}{r_*
  \beta^2}. 
\]
The discriminant of this equation is given by
\[
\Delta = \tfrac{1}{4} \xi^2 + \tfrac{1}{27} \eta^3.
\]
It can be verified that $D_*-1<0$ if $r_* \geq r_+$ so that $\eta<0$. Furthermore, some
lengthy algebra shows that $\Delta>0$ for $r_*\geq r_+$. Thus, $Q(\bar{r})$
has 3 different real roots. Now, one readily sees that 
$\mbox{d}Q/\mbox{d}\bar{r}=0$ if $\bar{r}= \pm \tfrac{1}{3}\sqrt{-\eta}$.
Furthermore a computation shows $Q(\tfrac{1}{3}\sqrt{-\eta})<0$.
Hence, given that $Q(\bar{r})\rightarrow \infty$ as $\bar{r}\rightarrow
\infty$, it follows that $Q(\bar{r})$ has at least one positive
root. Moreover, as a consequence of the Descartes rules of signs one
has that $Q(\bar{r})$ has exactly 2 positive roots and one
negative.

\medskip
Let $\alpha_3$ denote the root of $Q(\bar{r})$
obtained from using Cardano's formula:
\[
\alpha_3 = \sqrt[3]{-\tfrac{1}{2}\xi + \sqrt{\Delta}} + \sqrt[3]{-\tfrac{1}{2}\xi - \sqrt{\Delta}}.
\]
It can be verified that if $r_*>r_+$ and $m^2\geq q^2$, one has
$\alpha_3>0$. The remaining two roots are given in terms of $\alpha_3$
and $\eta$ by
\[
\alpha_1 = -\tfrac{1}{2}\alpha_3 -\tfrac{1}{2} \sqrt{-3\alpha^2_3 -
  4\eta}, \qquad \alpha_2 = -\tfrac{1}{2}\alpha_3 +\tfrac{1}{2} \sqrt{-3\alpha^2_3 -
  4\eta}.
\]
Notice, in particular, that $3\alpha^2_3 +4\eta<0$. In the extremal case,
$q^2=m^2$, one obtains the simpler expressions
\begin{subequations}
\begin{eqnarray}
&& \alpha_1 =-\frac{r_*}{4(r_*-m)}\left(
  m + \sqrt{m(8r_*-7m)} \right), \label{Alpha1Extremal}\\
&& \alpha_2 = -\frac{r_*}{4(r_*-m)}\left(
  m - \sqrt{m(8 r_*-7m)} \right), \label{Alpha2Extremal}\\
&& \alpha_3 = \frac{m r_*}{2(r_*-m)}. \label{Alpha3Extremal}
\end{eqnarray}
\end{subequations}
For future reference it is noticed that given 
\[
Q_* \equiv Q(r_*) = 2r^2_* \left( 2r_*^2 -5m r_* + 3q^2\right),
\]
one has that 
\begin{eqnarray*}
Q(r_*) \geq 0 & \mbox{if} & r_*\in[r_\circledast,\infty), \\
Q(r_*) <0 & \mbox{if} & r_* \in [r_+, r_\circledast),
\end{eqnarray*}
where
\begin{equation}
r_\circledast \equiv \tfrac{5}{4}m + \tfrac{1}{4} \sqrt{25m^2 -24 q^2}.
\label{rCircledast}
\end{equation}
The constant $r_\circledast$ will be seen to play a central role in
the subsequent analysis.  In particular, one has the following lemma
obtained from lengthy calculations using the expressions obtained in
the previous paragraphs:

\begin{lemma}
\label{Lemma:BehaviourRoots}
Given $m^2> q^2$, the roots $\alpha_1$, $\alpha_2$, $\alpha_3$ of
the polynomial $Q(\bar{r})$ satisfy the inequalities
\begin{eqnarray*}
 \alpha_1 < 0 < \alpha_2 < r_- <r_+<  r_* < r_\circledast < \alpha_3 &
 \mbox{ if } &  r_*\in(r_+,r_\circledast), \\
\alpha_1 < 0 < \alpha_2 < r_- < r_+<r_\circledast < \alpha_3 <r_* & \mbox{
  if } &  r_*\in (r_\circledast, \infty)., \\
\alpha_1 <0 <\alpha_2 <r_- <r_+ <\alpha_3=r_\circledast & \mbox{ if } & r_*=r_\circledast.
\end{eqnarray*}
In the extremal case ($m^2=q^2$) one has
\begin{eqnarray*}
 \alpha_1 < 0 < \alpha_2<m <  r_* < r_\circledast < \alpha_3 &
 \mbox{ if } &  r_*\in(m,r_\circledast), \\
\alpha_1 < 0 < \alpha_2 <m< r_\circledast < \alpha_3 <r_* & \mbox{
  if } &  r_*\in (r_\circledast, \infty), \\
\alpha_1< 0 < \alpha_2 < m <\alpha_3=r_\circledast & \mbox{ if } & r_*=r_\circledast. 
\end{eqnarray*}
\end{lemma}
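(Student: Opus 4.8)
The plan is to establish the ordering of the three real roots $\alpha_1<\alpha_2<\alpha_3$ of the depressed cubic $Q(\bar{r})=\bar{r}^3+\eta\bar{r}+\xi$ by locating them relative to the distinguished radii $0$, $r_-$, $r_+$ (or $m$ in the extremal case), $r_*$, and $r_\circledast$, using the sign of $Q$ at these test points together with the already-established root count (two positive, one negative) and the explicit Cardano root $\alpha_3$. The key tool throughout is the intermediate value theorem applied to $Q$, which is continuous and monotone on suitable intervals: since $\xi=q^2/(r_*\beta^2)>0$ we have $Q(0)=\xi>0$, and since the negative root exists and lies below the negative critical point $-\tfrac{1}{3}\sqrt{-\eta}$, this immediately gives $\alpha_1<0<\alpha_2$, where $\alpha_2$ is the smaller positive root.

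First I would evaluate $Q$ at the horizon radii. A direct substitution using $D(r_\pm)=0$ should give $Q(r_\pm)=r_\pm^3+\eta r_\pm+\xi$, and the plan is to show $Q(r_\pm)>0$, which—combined with $Q(0)>0$ and the fact that $Q$ dips below zero between its two positive roots—forces both horizons to lie outside the interval $(\alpha_2,\alpha_3)$; since $\alpha_2$ is small and positive while $\alpha_3$ will turn out large, this yields $\alpha_2<r_-<r_+<\alpha_3$. The placement of $r_*$ and $r_\circledast$ relative to $\alpha_3$ is governed by the sign computation already recorded in the excerpt, namely $Q(r_*)=Q_*=2r_*^2(2r_*^2-5mr_*+3q^2)$, whose sign changes precisely at $r_\circledast$: when $r_*\in(r_+,r_\circledast)$ one has $Q_*<0$, placing $r_*$ strictly between the two positive roots so that $r_*<\alpha_3$, whereas for $r_*>r_\circledast$ one has $Q_*>0$ with $r_*$ beyond $\alpha_3$, giving $\alpha_3<r_*$. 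A parallel evaluation $Q(r_\circledast)$, together with monotonicity of $\alpha_3$ as a function of $r_*$, then pins $r_\circledast$ between $r_+$ and $\alpha_3$ (with equality $\alpha_3=r_\circledast$ exactly at $r_*=r_\circledast$, which follows by checking $Q_*=0$ there).

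In the extremal case I would instead work directly from the closed forms in equations \eqref{Alpha1Extremal}--\eqref{Alpha3Extremal}. Here $\alpha_3=mr_*/\bigl(2(r_*-m)\bigr)$ is manifestly positive for $r_*>m$, and the signs of $\alpha_1,\alpha_2$ follow by inspecting the factor $m\pm\sqrt{m(8r_*-7m)}$: since $8r_*-7m>m$ for $r_*>m$ the radical exceeds $m$, giving $\alpha_1<0<\alpha_2$, while the remaining inequalities $\alpha_2<m$, the location of $r_\circledast$, and the trichotomy in $r_*$ reduce to explicit algebraic comparisons that can be verified by squaring and simplifying. \textbf{The main obstacle} is the sheer bulk of verifying $Q(r_\pm)>0$ and $\Delta>0$ in the non-extremal case, where $\eta$ and $\xi$ depend on $r_*$, $m$, $q$, and $\beta^2=\tfrac{4}{r_*^2}D_*$ through several nested expressions; the inequalities are true but require the ``lengthy algebra'' the excerpt alludes to, and the real care is in confirming that the sign of $Q_*$ alone suffices to separate $r_*$ from $\alpha_3$ without accidentally trapping it against $\alpha_2$—a point secured by the independent bound $\alpha_2<r_-$.
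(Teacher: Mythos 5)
The paper states this lemma without proof (it is said to follow from ``lengthy calculations''), so I can only measure your proposal against what the statement actually requires --- and your argument breaks at its central step. You propose to show $Q(r_\pm)>0$ and deduce that ``both horizons lie outside the interval $(\alpha_2,\alpha_3)$'', and from this you read off $\alpha_2<r_-<r_+<\alpha_3$. That conclusion is the \emph{negation} of your premise: $\alpha_2<r_\pm<\alpha_3$ places the horizons \emph{inside} the interval between the two positive roots, where the cubic is \emph{negative}, so the inequality you actually need is $Q(r_\pm)<0$. Moreover the sign you assert is wrong. Since $P(\bar{r})=\bar{r}^2\bigl((\gamma+\beta\bar{r})^2-D(\bar{r})\bigr)=\beta^2(\bar{r}-r_*)Q(\bar{r})$, $D(r_\pm)=0$, $\gamma=-\sqrt{D_*}$ and $\beta=2\sqrt{D_*}/r_*$, one finds
\[
Q(r_\pm)=\frac{r_\pm^2\,D_*\bigl(2r_\pm/r_*-1\bigr)^2}{\beta^2\,(r_\pm-r_*)}\leq 0,
\]
because $r_\pm<r_*$. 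This is the computation that does place $r_\pm$ in $[\alpha_2,\alpha_3]$, but it is only non-strict: equality occurs precisely at $r_*=2r_\pm$, where $r_\pm$ \emph{is} a root of $Q$ and the strict chain of the lemma genuinely fails. Your closing remark that the separation of $r_*$ from $\alpha_2$ is ``secured by the independent bound $\alpha_2<r_-$'' therefore rests on a bound that is itself violated at $r_*=2r_-$ (which lies in the admissible range once $q^2>\tfrac{8}{9}m^2$).

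The second unsupported step is ``a parallel evaluation $Q(r_\circledast)$ \ldots then pins $r_\circledast$ between $r_+$ and $\alpha_3$''. This would require $Q(r_\circledast)<0$ for every admissible $r_*$, and that is false. Take $m=1$, $q^2=\tfrac34$ (so $r_-=\tfrac12$, $r_+=\tfrac32$, $r_\circledast=\tfrac54+\tfrac14\sqrt{7}\approx 1.911$) and $r_*=2r_+=3$: then $D_*=\tfrac{5}{12}$, $\beta^2=\tfrac{5}{27}$, $\eta=-\tfrac{63}{20}$, $\xi=\tfrac{27}{20}$, and $Q(\bar{r})=(\bar{r}-\tfrac32)(\bar{r}^2+\tfrac32\bar{r}-\tfrac{9}{10})$ has roots $\approx\{-1.96,\,0.46,\,1.50\}$, so the largest root equals $r_+$ and lies \emph{below} $r_\circledast$; the asserted ordering $r_+<r_\circledast<\alpha_3<r_*$ fails on a whole neighbourhood of $r_*=2r_+$. (In the extremal case $r_*=2m$ gives $Q(\bar{r})=(\bar{r}-m)^2(\bar{r}+2m)$, so $\Delta=0$ and $\alpha_2=\alpha_3=m$, contradicting both the $\Delta>0$ claim you lean on and the strict inequalities; note also that three distinct real roots of a depressed cubic correspond to $\tfrac14\xi^2+\tfrac1{27}\eta^3<0$, not $>0$.) So the obstacle is not ``the sheer bulk'' of the algebra: done correctly, the test-point method disproves parts of the statement. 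What the sign of $Q$ at $0$, $r_\pm$, $r_*$ does deliver --- and what the rest of the paper actually uses --- is $\alpha_1<0<\alpha_2\leq r_-\leq r_+\leq\alpha_3$, together with $\alpha_3\leq r_*$ when $r_*\geq r_\circledast$ and $r_*<\alpha_3$ when $r_*<r_\circledast$; a correct write-up should prove those weaker claims or explicitly exclude the exceptional radii $r_*=2r_\pm$.
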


In terms of the roots $\alpha_1$, $\alpha_2$ and $\alpha_3$, equation \eqref{ReducedSqrt} can be conveniently rewritten as 
\begin{equation}
\bar{r}^{\prime 2} = \frac{\beta^2}{\bar{r}^2}(\bar{r}-r_*)(\bar{r}-\alpha_1)(\bar{r}-\alpha_2)(\bar{r}-\alpha_3). 
\label{ReducedSqrtRoots}
\end{equation}

\section{Analysis of the conformal curves}
\label{Section:AnalysisConformalCurves}

Our study of the behaviour of the conformal curves on the
Reissner-Nordstr\"om will be based on an analysis of the reduced
equation \eqref{ReducedEquation}. Three qualitatively different
behaviours can be identified according to whether $r_* <
r_\circledast$, $r_*=r_\circledast$ or $r_*>r_\circledast$. As it will
be seen, these
cases are associated, respectively, with a periodic, constant or
monotonically increasing behaviour of the function $\bar{r}$. 

\subsection{Conformal curves with constant $\bar{r}$}
\label{Section:CriticalCurve}
We start by investigating the possibility of having a conformal curve for
which $\bar{r}$ is constant ---that is, $\bar{r}'=\bar{r}''=0$. Using
equation \eqref{rprimeprime} one obtains the condition
\begin{equation}
\tfrac{1}{2} \partial_{\bar{r}} D(\bar{r}) = \beta \sqrt{D(\bar{r})}.
\label{Condition:ConstantCurves}
\end{equation}
The latter can be solved to give
\[
\bar{r} = \tfrac{5}{4}m \pm \tfrac{1}{4} \sqrt{25m^2 -24 q^2}. 
\]
Under the assumption $m^2\geq q^2$, it can be readily verified that
\[
\tfrac{5}{4}m - \tfrac{1}{4} \sqrt{25m^2 -24 q^2} \leq r_+ 
\leq  \tfrac{5}{4}m + \tfrac{1}{4} \sqrt{25m^2 -24 q^2} =r_{\circledast}.
\]
Thus, only the solution to condition \eqref{Condition:ConstantCurves}
with the positive radicand (i.e. $\bar{r}=r_\circledast$) will be
relevant for our subsequent discussion. For reasons which will become
clearer in the sequel, this particular conformal curve will be known
as the \emph{critical curve}.

 Some intuition can be obtained by
evaluating the constant $r_\circledast$ for the particular cases of
the Schwarzschild and the extremal Reissner-Nordstr\"om spacetime:
\begin{eqnarray*}
&& r_{\circledast} = \tfrac{5}{2}m, \qquad \mbox{ for } q=0, \\ 
&& r_{\circledast} = \tfrac{3}{2}m, \qquad \mbox{ for } q^2=m^2.
\end{eqnarray*}
Using the cases $q=0$ and $m^2=q^2$ as boundaries one can readily
check that $\tfrac{1}{9} \leq D_\circledast \leq \tfrac{1}{5}$,  where $D_\circledast \equiv D(r_\circledast)$.

\medskip
In order to understand the nature of the curve under consideration,
one has to analyse the behaviour of the functions $\bar{t}$, $\bar{u}$
and $\bar{v}$. Using equations \eqref{PhysicalNormalisation} and
\eqref{NullReducedA}-\eqref{NullReducedB} with $\bar{r}'=0$ one obtains
\begin{equation}
\bar{t} =\bar{u}-u_\circledast=\bar{v}-v_\circledast= \frac{\bar{\tau}}{\sqrt{D_{\circledast}}},
\label{SeparationCurvePhysicalParameter}
\end{equation}
where $u_\circledast\equiv u_*|_{r_*=r_\circledast}$,
$v_\circledast\equiv v_*|_{r_*=r_\circledast}$ and $u_*$ an $v_*$ are
given by expressions \eqref{NullCoordinatesDataNonExtremal} or
\eqref{NullCoordinatesDataExtremal} depending on whether one considers
the non-extremal or extremal case. Equation
\eqref{SeparationCurvePhysicalParameter} can be expressed in terms of
the unphysical proper time $\tau$ using formula
\eqref{PhysicalToUnphysicalProperTime}. One finds that
\[
{t} =
{u}-u_\circledast={v}-v_\circledast=\frac{r_\circledast}{2D_\circledast}\ln \left( \frac{2\Theta_\circledast + \beta_\circledast\tau}{2\Theta_\circledast -\beta_\circledast\tau} \right),
\]
where $\Theta_\circledast$ denotes the value of the conformal factor
of equation \eqref{ConformalFactorReduced} evaluated at $(\tau=0,r_*=r_\circledast)$.

\medskip
From the expressions in \eqref{SeparationCurvePhysicalParameter} one
has that
\[
\bar{t},\; \bar{u},\; \bar{v}\rightarrow \infty \qquad \mbox{ as } \qquad
\bar{\tau}\rightarrow \infty.
\]
The region of the Reissner-Nordstr\"om spacetime described by such
behaviour corresponds, respectively, to the lowermost  $i^+$ in the right hand side
of the conformal diagram of the non-extremal Reissner-Nordstr\"om
spacetime, and the lowermost one of
the extremal case ---see Figure \ref{ConformalDiagramRN}. This
computation also shows that close to future null infinity  the null coordinate $u$ tends
asymptotically to an affine parameter of the generators of
$\mathscr{I}^+$ ---see e.g. \cite{GriPod09,PenRin86,Ste91}. The
unphysical proper time required for the conformal curve to reach
future null infinity is given by
\[
\tau_{i^+} = \frac{2\Theta_\circledast}{\beta_\circledast}= \frac{1}{r_\circledast\sqrt{D}_\circledast}.
\]
Again, by looking at the cases $q=0$, $q^2=m^2$ one can estimate
\[
\frac{2}{\sqrt{5}m} \leq \tau_{i^+} \leq \frac{2}{m}.
\]

\subsection{Conformal curves with $r_\circledast < r_*$}
\label{Section:CGOutside}

The analysis of the conformal geodesics in the Schwarzschild spacetime
of \cite{Fri03c} proceeded by solving explicitly equation
\eqref{ReducedEquation} in terms of elliptic functions. One could
approach the analysis of the conformal geodesics in the
Reissner-Nordstr\"om spacetime in a similar fashion. Notice, however,
that while in the Schwarzschild case the polynomial $P(\bar{r})$ is
cubic, in the present case one has to deal with a quartic
polynomial. Hence, the elliptic functions one has to deal with are
more complex. In view of potential extensions of the present analysis
to more general classes of spacetimes it is desirable to follow a
procedure which, in as much as it is possible, does not depend on explicit solutions.

\medskip
If $r_\circledast<r_*$, a computation using equation
\eqref{StandardCGEqn2} together with the initial data
\eqref{CGInitialData} shows that $\bar{r}''_*>0$. As
$\bar{r}'_*=0$, one has a local minimum at $\bar{\tau}=0$ and one
needs to consider the positive root of equation
\eqref{ReducedSqrt}:
\begin{subequations}
\begin{eqnarray}
&& \bar{r}'=\sqrt{(\gamma+\beta \bar{r})^2 - D(\bar{r})}, \label{ReducedEquationExternalRegiona}\\
&&\phantom{\bar{r}'} = \frac{\beta}{\bar{r}}
\sqrt{(\bar{r}-r_*)(\bar{r}-\alpha_1)(\bar{r}-\alpha_2)(\bar{r}-\alpha_3)}.
\label{ReducedEquationExternalRegionb}
\end{eqnarray}
\end{subequations}
Using Lemma \ref{Lemma:BehaviourRoots}, it
follows that a conformal curve with $r_\circledast<\bar{r}(0)=r_*$ has no turning
points ---i.e. $\bar{r}'\neq 0$ for $\bar{\tau}>0$. Accordingly, the
function $\bar{r}$ is monotonically increasing if $r_*>r_\circledast$.

\medskip
 In order to assert the global existence of the solution to equation
\eqref{ReducedEquationExternalRegiona} ---or
\eqref{ReducedEquationExternalRegionb}--- one has to verify that
$\bar{r}$ does not blow up for finite $\bar{\tau}$. A
direct computation shows that 
\[
\partial_{\bar{r}} D(\bar{r}) = \frac{2}{\bar{r}^3}(m \bar{r}-q^2).
\]
Thus, $\partial_{\bar{r}} D(\bar{r})>0$ if $q^2/m<\bar{r} $. As it is
being assumed that $q^2
\geq m^2$ and  one has that $\tfrac{3}{2}m\leq r_\circledast$, it
follows, indeed, that $\partial_{\bar{r}}
D(\bar{r})>0$. Hence, one has $0<D_* \leq D(\bar{r})$. Furthermore,
one finds that
\[
\bar{r}' \leq \sqrt{(\gamma + \beta \bar{r})^2-D_*} < |\gamma + \beta \bar{r}|.
\]
Now, for $r_* \leq \bar{r}$, one has that
\[
|\gamma + \beta \bar{r}| = D_*\left| 1- 2\frac{\bar{r}}{r_*} \right| =D_*\left(2\frac{\bar{r}}{r_*}-1\right).
\]
Thus, one deduces to the differential inequality
\[
\bar{r}' < \frac{2D_*}{r_*}\bar{r}
\]
which can be integrated to give
\[
\bar{r} \leq r_* e^{2D_*\bar{\tau}/r_*}.
\]
Thus, one has that $\bar{r}$ does not blow up for finite value of
$\bar{\tau}$, and the solution to
\eqref{ReducedEquationExternalRegiona} with $\bar{r}(0)=r_*>r_\circledast$ exists
for all times. 

\medskip
In order to obtain more information about the solution we construct a
lower fence. From \eqref{ReducedEquationExternalRegionb} one readily
has that
\begin{equation}
 \frac{\varkappa\beta}{\bar{r}}\sqrt{\bar{r}-r_*} <\bar{r}' 
\label{LowerBarrier}
\end{equation}
where
\[
\varkappa \equiv \sqrt{(r_*-\alpha_1)(r_*-\alpha_2)(r_*-\alpha_3)}.
\]
The inequality \eqref{LowerBarrier} can be integrated to yield
\[
r_* +\left( \tfrac{1}{2}\left( 6\varkappa \beta\bar{\tau}+2\sqrt{16 r_*^3
      +9\varkappa^2\beta^2 \bar{\tau}^2} \right)^{1/3} - \frac{2 r_*}{\left(6\varkappa\beta^2 \bar{\tau}+\sqrt{16 r_*^3
      +9\varkappa^2 \beta^2\bar{\tau}^2}\right)^{1/3}}\right)^2 < \bar{r}.
\]
From this last inequality it follows that $\bar{r}\rightarrow \infty$
as $\bar{\tau}\rightarrow \infty$. Recall that $\bar{\tau}\rightarrow
\infty$ corresponds, following formula
\eqref{PhysicalToUnphysicalProperTime}, to a finite value of
$\tau$. Accordingly, the conformal curve reaches future null infinity
$\mathscr{I}^+$ for a finite value of $\tau$. 

\medskip
In order discuss the behaviour of the function $\bar{t}$, we consider
the normalisation condition \eqref{PhysicalNormalisation} with
$\bar{r}$ being the solution discussed in the previous paragraphs. An
argument similar to the one used for $\bar{r}$ shows that $\bar{t}$
with $r_*>r_\circledast$ does not blow up in finite time and that
$\bar{t}\rightarrow \infty$ as $\bar{\tau}\rightarrow \infty$. The
coordinate $t$, however, is not a good coordinate to discuss the
behaviour of the conformal curves with respect to (future) null
infinity. In order to do this, we consider
equations \eqref{NullReducedA}, with $\bar{r}$ the
solution of equation \eqref{ReducedEquationExternalRegiona} with
$r_*>r_\circledast$. Global existence of solutions to
\eqref{ReducedEquationExternalRegiona} follows, again, by showing that
the solution (and its derivative) does not blow up in finite time. We
now look in more detail at the behaviour of $\bar{u}$ as
$\bar{\tau}\rightarrow\infty$. As $D(\bar{r})$ is bounded for
$\bar{r}\in (r_\circledast,\infty)$, it follows that there exists a
sufficiently large positive number $\bar{\tau}_{\times}$ and positive constants, $C_1$
and $C_2$, for which
\[ 
\frac{C_1}{\bar{r}'}<\bar{u}' < \frac{C_2}{\bar{r}'}, \qquad
\mbox{for} \qquad \bar{\tau}>\bar{\tau}_{\times}.
\]
Using the chain rule in the form $\bar{u}' = \bar{r}'
\mbox{d}u/\mbox{d}\bar{r}$ and by increasing $\bar{\tau}_{\times}$ if
necessary, one can find constants, $\tilde{C}_1$ and $\tilde{C_2}$,
for which one has
\[
\frac{\tilde{C}_1}{\bar{r}^2} <
\frac{\mbox{d}\bar{u}}{\mbox{d}\bar{r}} <
\frac{\tilde{C}_2}{\bar{r}^2}, \qquad \mbox{for} \qquad \bar{r} > \bar{r}(\bar{\tau}_{\times}).
\]
From the latter, it follows that $\bar{u}$ goes to a finite constant
value (which depends on $r_*$) as $\bar{\tau}\rightarrow
\infty$. Thus, one concludes that in the unphysical (i.e. conformally
rescaled) picture, the conformal curve reaches future null infinity
for a finite value of the unphysical proper time $\tau_{\mathscr{I}^+}$. This value
can be read from the conformal factor \eqref{ConformalFactorReduced}
to be $\tau_{\mathscr{I}^+} = 2\Theta_*/\beta$. 

\medskip
We summarise the results of the present section in the following
proposition:

\begin{proposition}
The conformal curves with initial data given by \eqref{CGInitialData} and
$r_*>r_\circledast$ exist for all $\bar{\tau}\in[0,\infty)$. The
curves reach future null infinity for a finite value of the parameter
$\tau$. 
\end{proposition}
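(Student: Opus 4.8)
The plan is to assemble the estimates established earlier in Section \ref{Section:CGOutside} into the two assertions of the statement: global existence in $\bar{\tau}$, and arrival at $\mathscr{I}^+$ at a finite value of $\tau$. Since essentially all of the analytic work has already been carried out for the individual curve, the proof reduces to organising these facts and tying them to the geometric conclusion.

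First I would treat \eqref{ReducedEquationExternalRegiona} as a first-order autonomous ODE for $\bar{r}(\bar{\tau})$ and argue global existence. Since $r_*>r_\circledast$, Lemma \ref{Lemma:BehaviourRoots} places $r_*$ as the largest of the four roots appearing in the factorisation \eqref{ReducedEquationExternalRegionb}, so for $\bar{r}>r_*$ every factor is strictly positive and the right-hand side is smooth and strictly positive. Hence $\bar{r}$ is strictly increasing with no turning point, and the only possible obstruction to global existence would be blow-up in finite $\bar{\tau}$. This is excluded by the upper fence $\bar{r}\leq r_* e^{2D_*\bar{\tau}/r_*}$ already derived from the differential inequality $\bar{r}'<(2D_*/r_*)\bar{r}$; a solution that stays below an everywhere-finite function cannot escape to infinity in finite parameter time, so the solution of \eqref{ReducedEquationExternalRegiona} with $\bar{r}(0)=r_*$ exists for all $\bar{\tau}\in[0,\infty)$.

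For the second assertion I would use the lower fence \eqref{LowerBarrier}, whose integrated form forces $\bar{r}(\bar{\tau})\to\infty$ as $\bar{\tau}\to\infty$. The crux is then to convert this divergence at $\bar{\tau}\to\infty$ into a statement at a \emph{finite} unphysical parameter. This is supplied by the inverse change of parameter \eqref{UnphysicalToPhysicalProperTime}: as $\bar{\tau}\to\infty$ one has $\tau\to 2\Theta_*/\beta=:\tau_{\mathscr{I}^+}$, a finite limit, at which the conformal factor \eqref{ConformalFactorReduced} satisfies $\Theta=D_*\big((2\Theta_*/\beta)^2-\tau^2\big)\to 0$. The vanishing of $\Theta$ is precisely the condition characterising the conformal boundary, so the curve terminates at the conformal boundary after a finite lapse of $\tau$.

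The step I expect to require the most care is distinguishing arrival at \emph{null} infinity from arrival at spatial infinity, since both involve $\bar{r}\to\infty$. Here I would invoke the null-coordinate analysis: using \eqref{NullReducedA} together with the bounds $C_1/\bar{r}'<\bar{u}'<C_2/\bar{r}'$ and the chain rule $\bar{u}'=\bar{r}'\,\mbox{d}u/\mbox{d}\bar{r}$, one obtains $\tilde{C}_1/\bar{r}^2<\mbox{d}\bar{u}/\mbox{d}\bar{r}<\tilde{C}_2/\bar{r}^2$ for large $\bar{r}$, so that $\bar{u}$ converges to a finite value while $\bar{v}$ and $\bar{t}$ diverge. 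A finite limiting $u$ together with $\bar{r}\to\infty$ locates the endpoint on $\mathscr{I}^+$ rather than at $i^0$, completing the identification and hence the proof.
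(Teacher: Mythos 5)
Your proposal is correct and follows essentially the same route as the paper: monotonicity of $\bar{r}$ from the root structure of Lemma \ref{Lemma:BehaviourRoots}, the exponential upper fence for global existence, the lower fence \eqref{LowerBarrier} to force $\bar{r}\to\infty$, the finite limit $\tau\to 2\Theta_*/\beta$ from \eqref{UnphysicalToPhysicalProperTime}, and the convergence of $\bar{u}$ via $\mbox{d}\bar{u}/\mbox{d}\bar{r}=\mathcal{O}(\bar{r}^{-2})$ to locate the endpoint on $\mathscr{I}^+$. The only place the paper is slightly more careful is at $\bar{\tau}=0$, where the right-hand side of the first-order equation \eqref{ReducedEquationExternalRegionb} vanishes (so the autonomous first-order problem is degenerate there); the paper resolves this by computing $\bar{r}''_*>0$ directly from the second-order equation \eqref{StandardCGEqn2}, which guarantees the curve actually enters the region $\bar{r}>r_*$ where your argument applies.
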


\subsection{Conformal curves with $r_* < r_\circledast$}
\label{Section:CGInside}

The case of conformal curves with $r_*<r_\circledast$ is, in some
sense, the most interesting one. From equation \eqref{ReducedSqrt},
the factorisation of the quartic polynomial $P(\bar{r})$ and Lemma
\ref{Lemma:BehaviourRoots} one concludes that if $r_*<r_\circledast$,
then the resulting conformal curve will have turning points at
$\bar{r}=r_*$ and $\bar{r}=\alpha_2$. Using equation
\eqref{rprimeprime} a computation shows that
\[
\bar{r}''|_{\bar{r}=r_*} <0, \qquad \bar{r}''|_{\bar{r}=\alpha_2}>0. 
\]
Hence, the turning points given by $r_*$ and $\alpha_2$ correspond to
a maximum and a minimum of the function $\bar{r}$, respectively. As
$\bar{r}$ (and $\bar{r}'$) are bounded, one concludes that the
solution to equation \eqref{ReducedSqrt} with $r_*<r_\circledast$
exists for all $\bar{\tau}>0$.

\subsubsection{Behaviour of the curves in the non-extremal case}

 As a consequence of the discussion in the previous paragraph, the
 function $\bar{r}$ is initially decreasing. Moreover, from equations
 \eqref{NullReducedA}-\eqref{NullReducedB} one has that $u'_*,\,v'_*>0$
 so that the concavity of the curve points, initially, away from the
 horizon ---see figure \ref{Figure:Curves}.
From $\alpha_2<r_+$, it
 follows there
 exists a value of $\bar{\tau}$ for which $\bar{r}=r_+$ ---implying
 that the conformal curve crosses the horizon. The temporal coordinate
 $t$ is not appropriate for this discussion as $t\rightarrow \infty$ for any
 curve approaching the event horizon ---this can readily be seen using the
 normalisation \eqref{PhysicalNormalisation} and the fact that
 $D(r)=0$ at the event horizon ---that is, at $r=r_+$. Hence, one makes use of the null coordinate
 $v$. The evolution of this coordinate along the conformal curve is described by
 equation \eqref{NullReducedB}.  Initially, one has
 that $\bar{v}'(0)=1/\sqrt{D(\bar{r})}$, so that $\bar{v}$ is, at least initially,
 increasing. As a
consequence of Lemma \ref{Lemma:BehaviourRoots} one sees that the
conformal curve should cross the event and the Cauchy horizon before
reaching the minimum of $\bar{r}$ at $\bar{r}=\alpha_2$. It is just
necessary to check that the various field remain regular at the horizons.

\medskip
 As the curve approaches the event horizon at
 $\bar{r}=r_+$, both the numerator and denominator of the right hand
 side of equation \eqref{NullReducedB} vanish ---in particular, the numerator can
 vanish as $\bar{r}'<0$. Using the L'Hopital rule and equation
\eqref{rprimeprime} one finds that $\bar{v}'|_{\bar{r}=r_+}$ is well
defined and positive. Accordingly, the curve enters region II of
conformal diagram of the non-extremal Reissner-Nordstr\"om spacetime
---see Figure \ref{Figure:Curves}. A similar situation occurs as
$\bar{r}$ approaches $r_-$: the numerator and denominator of equation
\eqref{NullReducedB} both vanish; using the L'Hopital on verifies that
$\bar{v}'$ is well defined at $r_-$. Notice that as $\bar{r}'<0$ and
$\bar{v}'>0$, the curve exits the region II of the conformal diagram
through the left hand side of the Cauchy horizon. The turning point
at $\bar{r}=\alpha_2$ is located in the region III of the conformal
diagram. After the curve has reached this point, one has that
$\bar{r}'>0$ and the behaviour of the curve as it approaches again the
horizon at $\bar{r}=r_-$ is different. In this case the numerator of
the right hand side of equation \eqref{NullReducedB} tends to a
non-zero value and one has that $\bar{v}\rightarrow \infty$ ---and
hence, also $\bar{v}\rightarrow \infty$. In order to discuss the
behaviour curve beyond this point one would have to introduce a new
set of null coordinates.

\medskip
In subsection \ref{ExplicitExpressions} it will be shown that the
points with $\bar{r}=r_+, \; r_-,\; \alpha_2$ are reached for a
finite value of the physical proper time $\bar{\tau}$. 

\begin{figure}[t]
\centerline{\includegraphics[width=\textwidth]{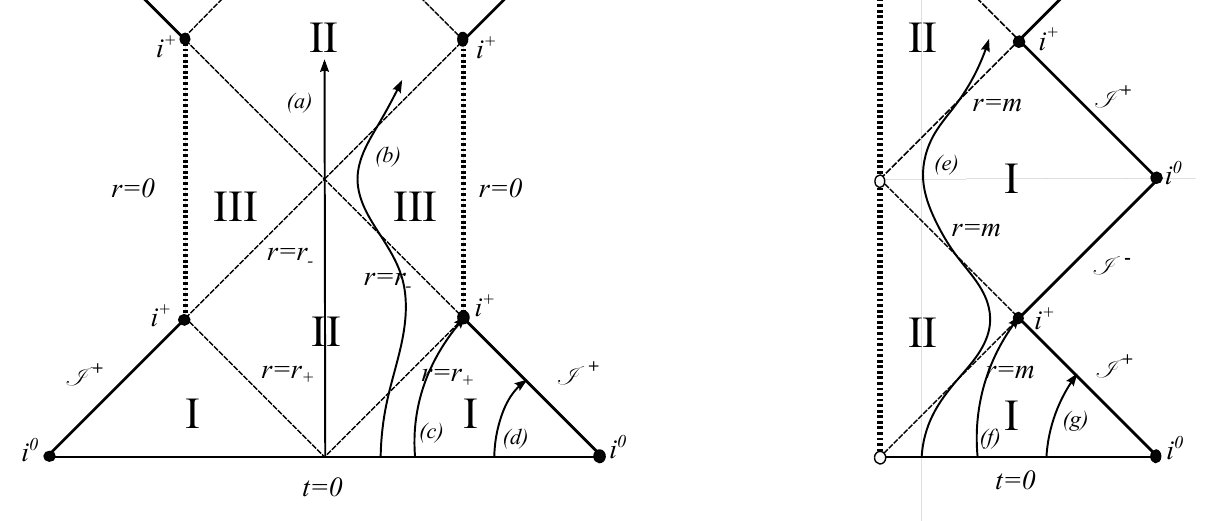}}
\caption{Schematic illustration of the behaviour of conformal
curves. To the left the non-extremal case: (a) the curve starting at
$r_*=r_+$; (b) a curve with $r_*<r_\circledast$; (c) the critical
curve; (d) a curve with $r_*>r_\circledast$. To the right the extremal
case: (e) a curve with $r_*<r_\circledast$; (f) the critical curve;
(g) a curve with $r_*>r_\circledast$. \emph{The curves are not
depicted on scale.}}
\label{Figure:Curves}
\end{figure}

\subsubsection{Behaviour of the curves in the extremal case}

As in the non-extremal case, conformal
curves in the extremal Reissner-Nordstr\"om spacetime with
$r_*<r_\circledast$ satisfy $\bar{r}'_*<0$ and $\bar{v}'_*>0$. Using
the L'Hopital rule one can verify that $\bar{v}'$ is well defined and
positive at $\bar{r}=r_+=m$. Thus, the conformal curve penetrates in the
region II of the conformal diagram of the spacetime ---cf. Figure \ref{Figure:Curves}. The essential difference
with respect to the non-extremal case is that the turning point
given by $\bar{r}=\alpha_2$ is now in region II. After the curve has
passed this point one has $\bar{r}'>0$ and $\bar{v}'>0$. Hence, from
equation \eqref{NullReducedB} it follows that $
\bar{v},\; \bar{v}'\rightarrow \infty$ the second time the curve
approaches $\bar{r}=r_+=m$. In order to follow the behaviour of the curve
beyond this point one would need a new set of null coordinates.

\medskip
In subsection \ref{ExplicitExpressions} it will be shown that the
points with $\bar{r}=m,\; \alpha_2$ are reached for a
finite value of the physical proper time $\bar{\tau}$.

\subsubsection{Regions of the spacetime not covered by the congruence}
As already discussed, the turning point described by the condition
$\bar{r}=\alpha_2$ is located in the region II in the extremal
case and in region III in the non-extremal case. In these regions the
curves $\bar{r}=\mbox{constant}$ are timelike. Regarding $\alpha_2$ as
a function of $r_*$, it follows from the
expressions given in Section \ref{Section:PolynomialP}  that there exists a certain value of
$r_*$ for which $\alpha_2$ attains a (non-zero) minimum. This minimum
value depends only on the value of $m$ and $q$. 

\medskip
 The phenomenon described in the previous paragraph is better analysed in the extremal case where simpler
analytical expressions are available. Using expression
\eqref{Alpha2Extremal} one can readily see that $\alpha_2$ attains a
minimum value of $(\sqrt{2}-1/2)m\approx 0.91m$ along the conformal
curve with $r_* = (2-1/\sqrt{2})m\approx 1.29m$. All other conformal
curves with $r_*<r_\circledast$ will have a higher value for the
$r$-location of the turning point.

\medskip
The discussion in the previous paragraphs shows that there exist
regions in the regions III of the non-extremal Reissner-Nordstr\"om
spacetime and the regions II in the extremal case that cannot be probed
by means of the family of conformal curves under consideration. \emph{In
particular, a conformal curve cannot get arbitrarily close to the
singularities of the spacetime ($r=0$). In this sense, one can regard
our class of conformal curves as singularity avoiding.}

\subsubsection{Explicit expressions in terms of elliptic functions}
\label{ExplicitExpressions}

As discussed previously, the function $\bar{r}$ is decreasing if $\bar{r}\in(\alpha_2,r_*)$. If this is the case then equation \eqref{ReducedSqrt} implies  
\[
\bar{r}'=-\frac{\beta}{\bar{r}}
\sqrt{(\bar{r}-r_*)(\bar{r}-\alpha_1)(\bar{r}-\alpha_2)(\bar{r}-\alpha_3)}.
\]
The latter implies, in turn
\[
\bar{\tau} =- \frac{1}{\beta} \int_{r_*}^{\bar{r}}  \frac{s\mbox{d}s}{\sqrt{(s-\alpha_1)(s-\alpha_2)(s-\bar{r}_*)(s-\alpha_3)}}.
\]
The integral in the right hand side can be evaluated in terms of elliptic functions ---see e.g. \cite{Law89}. For example, the physical proper time required by the curve to go from $\bar{r}=r_*$ to $\bar{r}=\alpha_2$ is given by 

\begin{eqnarray*}
&& \beta \bar{\tau}(\alpha_2) = \frac{2 \alpha_1}{\sqrt{(\alpha_3-\alpha_2)(\bar{r}_*-\alpha_1)}}K\left(\sqrt{\frac{(\bar{r}_*-\alpha_2)(\alpha_3-\alpha_1)}{(\alpha_3-\alpha_2)(\bar{r}_*-\alpha_1)}}\right) \\
&& \hspace{2cm} + \frac{2(\alpha_2-\alpha_1)}{\sqrt{(\alpha_3-\alpha_2)(\bar{r}_*-\alpha_1)}}\Pi\left( \frac{\bar{r}_*-\alpha_2}{\bar{r}_*-\alpha_1}; \sqrt{\frac{(\bar{r}_*-\alpha_2)(\alpha_3-\alpha_1)}{(\alpha_3-\alpha_2)(\bar{r}_*-\alpha_1)}} \right),
\end{eqnarray*}
where $K(\;\cdot\;)$ and $\Pi(\;\cdot\;;\;\cdot\;)$ denote, respectively, the
complete elliptic integrals of the first and third kind. The above
expression is valid for both the non-extremal and extremal cases. For
specific values of $m$ and $\bar{r}_*$, these integrals can be
accurately evaluated with a computer algebra system. Hence, the
expression for $\bar{\tau}(\alpha_2)$ can be used to verify the
accuracy of numerical solutions to the conformal curve equations. However, for analytical purposes,
expressions of the type given above are too clumsy to be used. Insight
into the behaviour of $\bar{\tau}$ regarded as a function of $\bar{r}$
and $r_*$ can be obtained by means of suitable estimates.

\medskip
As a consequence of Lemma \ref{Lemma:BehaviourRoots} one has that for $\alpha_2\leq \bar{r}\leq r_*$ one has
\[
-\varkappa_1^2 (\bar{r}-\alpha_2)(\bar{r}-r_*) \leq P(\bar{r}) \leq -\varkappa_2^2 (\bar{r} -\alpha_2)(\bar{r}-r_*),
\]
where
\[
\varkappa^2_1 \equiv (\alpha_2-\alpha_1)(\alpha_3-\alpha_2), \qquad \varkappa^2_2 \equiv (r_*-\alpha_1)(\alpha_3-r_*). 
\]
Furthermore, one finds that
\[
\frac{1}{\varkappa_1} I(\bar{r}) \leq \beta \bar{\tau}\leq \frac{1}{\varkappa_2} I(\bar{r}),
\]
where 
\begin{eqnarray*}
&& I(\bar{r}) \equiv \int_{\bar{r}}^{r_*} \frac{s \mbox{d}s}{\sqrt{(s-\alpha_2)((r_*-s))}}, \\
&& \phantom{I(\bar{r})}= \sqrt{(\bar{r}-\alpha_2)(r_*-\bar{r})}+\tfrac{1}{2}(\alpha_2+r_*) \left( \arcsin\left( \frac{3r_*+\alpha_2}{(\alpha_2-r_*)^2} \right)-\arcsin\left(\frac{2\bar{r}+r_* +\alpha_2}{(\alpha_2-r_*)^2}\right)\right).
\end{eqnarray*}
In particular, one has that $I(\alpha_2)$ is finite. Thus, one
concludes that the conformal curves with $r_*<r_\circledast$ reach the
turning point $\bar{r}=\alpha_2$ (and hence, also the horizons at $\bar{r}_\pm$) in finite physical proper time.

\subsection{The conformal curve starting at $r_+$}
\label{Section:CGBifurcationSphere}

In the case of a non-extremal Reissner-Nordstr\"om spacetime it is also of interest to analyse the behaviour
of the conformal curve starting at the bifurcation sphere
($\bar{r}_*=\bar{r}_+$). Observing that $D(r_+)=0$ so that
$\beta_+=0$, it follows from equation \eqref{rprimeprime} that
\[
\bar{r}'' = \frac{q^2}{\bar{r}^3}-\frac{m}{\bar{r}^2}.
\]
Using the initial conditions 
\[
 r_*= r_+, \qquad \bar{r}_*' =0,
\]
one can readily integrate to obtain
\[
\bar{r} = m + \sqrt{m^2-q^2} \cos \bar{\tau}>0.
\] 
Because of the reflection symmetry of the Reissner-Nordstr\"om
spacetime with respect to the bifurcation sphere and the timelike
character of the curve in question, the conformal curve remains always
remains in the middle of the conformal diagram of the spacetime. The
curve starts in region II where the surfaces of constant $\bar{r}$ are
spacelike. The function $\bar{r}$ is decreasing in the region II, and
reaches its minimum value at the Cauchy horizon ---where
$\bar{r}=\bar{r}_-$. From here $\bar{r}$ becomes increasing and
eventually reaches its maximum (and initial value) at
$\bar{r}=\bar{r}_+$. This corresponds to a second bifurcation sphere
in the Penrose diagram of the maximal analytic extension of the
spacetime ---see Figure \ref{Figure:Curves}. It is worth observing
that any point along the conformal curve can be reached in a finite
amount of (physical or unphysical) proper time. In particular, the
distance between two consecutive bifurcation spheres in the Penrose
diagram measured in terms of the parameter $\bar{\tau}$ is $2\pi$.

\subsection{Further analysis of the behaviour of the critical curve
  close to timelike infinity}
\label{Section:TimelikeInfinity}

The purpose of the present section is to further discuss the behaviour
of the critical curve $\bar{r}=r_\circledast$ as it approaches
$i^+$. The reason for this analysis is motivated by the observation
made in \cite{Fri03a} that in the Schwarzschild spacetime the
corresponding critical curve, which is timelike for
$\tau\in[0,\tau_{i^+})$, becomes null at $\tau=\tau_{i^+}$. This
observation indicates a degeneracy of the conformal structure of the
spacetime.

\subsubsection{The intersection of the critical curve and null infinity}
First, we consider the behaviour of conformal curves with
$r_*>r_\circledast$ as $r_*\rightarrow r_\circledast$ from the
right. These curves reach future null infinity in
a finite amount of unphysical proper time
\[
\tau_{\mathscr{I}^+}\equiv \frac{2 \Theta_*}{\beta} = \frac{1}{r_* \sqrt{D_*}},
\] 
as it can be seen from the expression \eqref{ConformalFactorReduced}
for the conformal factor $\Theta$.  Setting $r_*=(1+\epsilon)r_\circledast$, 
for small $\epsilon>0$, it follows that
\[
\tau_{\mathscr{I}^+} = \frac{1}{r_\circledast
  \sqrt{D_\circledast}}\left(1+3\epsilon +\mathcal{O}(\epsilon^2)\right).
\]
Thus, one has in particular that
\[
\frac{\mbox{d}\tau_{\mathscr{I}^+}}{\mbox{d}\epsilon}\big|_{\epsilon=0}=\frac{3}{r_\circledast
  \sqrt{D_\circledast}},
\]
that is, future null infinity approaches $i^+$ at a finite positive angle.

\subsubsection{The intersection of the critical curve and the horizon}
The analysis of the behaviour of the conformal curves with $r_*<r_\circledast$ is more
delicate.  In what follows, let $\bar{\tau}_{\mathscr{H}^+}$ denote
the value of the physical proper time for which a conformal curve
reaches the horizon. From the expression \eqref{ReducedSqrtRoots} it
follows that
\[
\beta\bar{\tau}_{\mathscr{H}^+} = \int_{r_+}^{r_*} \frac{s \mbox{d}s}{\sqrt{(s-r_*)(s-\alpha_1)(s-\alpha_2)(s-\alpha_3)}}.
\]
Now, setting $r_* = (1-\epsilon)r_\circledast$
for small $\epsilon>0$, and observing  Lemma
\ref{Lemma:BehaviourRoots} it follows that
\[
 \frac{s}{\sqrt{(s-r_*)(s-\alpha_1)(s-\alpha_2)(s-\alpha_3)}}=\frac{s}{(r_\circledast
   -s)\sqrt{(s-\bar{\alpha}_1)(s-\bar{\alpha}_2)}} + \mathcal{O}(\epsilon), \qquad s<r_\circledast,
\]
where $\bar{\alpha}_1$ and $\bar{\alpha}_2$ are the values of the
roots $\alpha_1$ and $\alpha_2$ corresponding to $r_* =
r_\circledast$.  A computation then shows that
\[
\beta\bar{\tau}_{\mathscr{H}^+} =\kappa -
\frac{r_\circledast}{\sqrt{(r_\circledast-\bar{\alpha}_1)(r_\circledast-\bar{\alpha}_2)}}\ln\epsilon
+ \mathcal{O}(\epsilon),
\]
with $\kappa$ a constant depending on $m$ and $q$. One readily sees
that $\bar{\tau}_{\mathscr{H}^+}\rightarrow \infty$ as
$\epsilon\rightarrow 0$ consistently with the discussion of Section
\ref{Section:CriticalCurve}. It can be explicitly shown that
\[
\eta \equiv \frac{r_\circledast}{\sqrt{(r_\circledast-\bar{\alpha}_1)(r_\circledast-\bar{\alpha}_2)}}\leq 1,
\]
where the equality is achieved only for $q^2=m^2$. Finally, noticing that
\[
\frac{2\Theta_*}{\beta}= \frac{1}{r_\circledast \sqrt{D_\circledast}}\left( 1+
  \frac{1}{r_\circledast D_\circledast}(r_\circledast -m)\epsilon + \mathcal{O}(\epsilon^2) \right),
\]
one concludes from equation \eqref{UnphysicalToPhysicalProperTime} that
\[
\tau_{\mathscr{H}^+} = \frac{1}{r_\circledast
  \sqrt{D_\circledast}}\left(1+2
  e^{-\kappa}\epsilon^\eta
  + \frac{1}{r_\circledast D_\circledast}(r_\circledast-m)\epsilon +
  \mathcal{O}\left(\epsilon^{1+\eta}\right)\right). 
\]

It follows that
\[
\lim_{\epsilon\rightarrow 0}\frac{\mbox{d} \tau_{\mathscr{H}^+}}{\mbox{d}\epsilon} =
\left\{
\begin{array}{lll}
\infty && q^2<m^2 \\
&& \\
\displaystyle\frac{2}{m}(5+2e^{-\kappa})<\infty && q^2=m^2.
\end{array}
\right. .
\]
Thus, the critical conformal curve and the horizon are tangent at $i^+$
for $q^2<m^2$ ---that is, the critical curve becomes null at
$i^+$. A similar singular behaviour for the conformal geodesic reaching $i^+$
in the Schwarzschild spacetime has been described in
\cite{Fri03a}.  Indeed, it can be verified that
$\eta=\frac{1}{\sqrt{2}}$ for $q=0$. This behaviour indicates a degeneracy of the
conformal structure at $i^+$ for $q^2<m^2$. The most remarkable
feature of the present analysis is the fact that the critical curve remains
timelike in the extremal case $q^2=m^2$. In this case, a
more regular behaviour of the conformal structure is to be expected.

\section{Analysis of the conformal deviation equations}
\label{Section:ConformalDeviationEquations}

The purpose of this section is to show that the congruence of
conformal curves under consideration does not form caustics in the
outer domain of communication of the Reissner-Nordstr\"om
spacetime. 

\subsection{Basic equations}
Inspired on a similar discussion in \cite{Fri03c}, we will study
solutions to 
the reduced $\tilde{\bm g}$-adapted deviation equation 
\eqref{ReducedWarpProductDeviationEquation} with suitable initial
data. In the previous section the value of the coordinate $r$ on the
initial hypersurface $\tilde{S}$ has been used to parametrise the
conformal curves of our congruence. Thus, it would be natural to use
the vector ${\bm\partial}_r$ as deviation vector. However, as we are
specifically interested in the behaviour of the congruence near the
event horizon, this choice is no longer adequate. Instead, we consider the vector
field ${\bm \partial}_\varrho$ where $\varrho$ is the isotropic radial
coordinate given by formulae \eqref{IsotropicCoordinates}. Thus, in
what follows we set
\begin{equation}
\bar{\bm Z}_* = ({\bm \partial}_\varrho \bar{x})_*, \qquad \chi \equiv \tfrac{1}{2}R[{\bm l}],
\qquad \zeta \equiv \not{\!\!D}_{\bar{\bm Z}} \beta.
\label{Definitions:Deviation}
\end{equation}
It follows that equation
\eqref{ReducedWarpProductDeviationEquation} can be rewritten in the
form
\begin{equation}
\omega'' -(\beta^2 + \bar{\chi}) \omega = \zeta
\label{enemy}
\end{equation}
where the bar over $\chi$ indicates that the function is regarded
as depending on $\bar{\tau}$. In the particular case of the Reissner-Nordstr\"om spacetime one has
that
\begin{equation}
\bar{\chi} = \frac{2m}{\bar{r}^3}-\frac{3q^2}{\bar{r}^4}. 
\label{FormulaPsi}
\end{equation}
One readily sees that $\bar{\chi}>0$ if $\bar{r}>3q^2/2m$. By direct
evaluation it can be checked that if $m^2\geq q^2$, then $3q^2/2m\leq r_\circledast$, with
the equality being achieved in the extremal case. Finally, it is
noticed that 
\begin{equation}
\frac{\mbox{d}\bar{\chi}}{\mbox{d}\bar{r}} = \frac{6}{\bar{r}^5}(2q^2-m\bar{r}) >0
\qquad \mbox{ if }\quad  \bar{r}<\frac{2q^2}{m}.
\label{SignDerivativeChi}
\end{equation}

\medskip
Now, recalling that $\bar{X}= \bar{x}'$ and taking into account
\eqref{Definitions:Deviation}, one finds that the initial data for equation \eqref{enemy} on $\tilde{S}$ is given,  for $\varrho_*>\varrho_+$ (i.e
$r_*>r_+$), by
\begin{subequations}
\begin{eqnarray}
&& \omega_*\equiv {\bm \epsilon}_{\bm l}(\bar{\bm X},\bar{\bm Z})_*
=\bar{t}'_*
\left(\frac{\partial\bar{r}}{\partial\varrho}\right)_*-\bar{r}'_*\left(\frac{\partial \bar{t}}{\partial\varrho}\right)_*=\frac{r_*}{\varrho_*}>1, \label{DeviationData1}\\
&&\omega'_*=\left( {\not\!\!D}_{\bar{\bm X}} {\bm \epsilon}_{\bm
    l}(\bar{\bm X},\bar{\bm Z})\right)_*=0, \label{DeviationData2}
\end{eqnarray}
\end{subequations}
where equations \eqref{IsotropicCoordinatesDE} and \eqref{CGInitialData} have been used to
simplify the expression for $\omega_*$. 

\medskip
 Following the discussion of Section
\ref{Section:ConformalCurves}, the coefficients $\beta^2$ and $\zeta$
in equation \eqref{enemy} are constant along a given
conformal curve, and thus, can be conveniently be evaluated at the
initial hypersurface $\tilde{\mathcal{S}}$ ---cf. the remark after equation
\eqref{ConstantsAlongCurve}. Recalling that $\beta^2 = 4 D_*/r_*^2$,
it follows that
\begin{eqnarray*}
&&\partial_{\varrho_*} \beta^2 =
\left(\frac{\mbox{d}r}{\mbox{d}\varrho}\right)_* \partial_{r_*}\beta^2 =
\frac{4}{r_*^3}\left(\frac{\mbox{d}r}{\mbox{d}\varrho}\right)_* \left( r_*
  \frac{\mbox{d}D_*}{\mbox{d}r_*}-2D_* \right), \\
&& \phantom{\partial_{\varrho_*} \beta^2}= -\frac{8}{r_*^5}\left(\frac{\mbox{d}r}{\mbox{d}\varrho}\right)_* (r_*^2-3mr_*+2q^2).
\end{eqnarray*}
Finally, using $\mbox{d}r/\mbox{d}\varrho = r \sqrt{D}/\varrho$ and
$\partial_\varrho \beta = (\partial_\varrho \beta^2)/2\beta $
one concludes that
\[
\zeta = (\partial_\varrho \beta)_*=- \frac{2}{\varrho_* r_*^3} (r_*^2-3mr_*+2q^2).
\]
This expression is positive if 
\[
r_*\in \left(\tfrac{3}{2}m -
\tfrac{1}{2}\sqrt{9m^2-8q^2},\tfrac{3}{2}m + \tfrac{1}{2}\sqrt{9m^2-8q^2}\right).
\]
In the extremal case the expression for $\zeta$ simplifies to
\begin{equation}
\zeta = -\frac{2}{r_*^3}(r_*-2m),
\label{ZetaExtremal}
\end{equation}
which is positive for $r_*<2m$. It is also noticed that
\[
\beta^2 + \chi_* = \frac{1}{\bar{r}_*^4}\left(4r^2_* - 6m r_* +q^2 \right).
\]
One concludes then that 
\begin{subequations}
\begin{eqnarray}
&& \beta^2+\chi_* <0 \qquad \mbox{ if } r_* \in (r_+,\tfrac{3}{4}m + \tfrac{1}{4}\sqrt{9m^2-4q^2}), \label{PositivityConditions1}\\
&& \beta^2 + \chi_* \geq 0 \qquad \mbox{ if } r_* \in
\big[\tfrac{3}{4}m + \tfrac{1}{4}\sqrt{9m^2-4q^2}, \infty
\big). \label{PositivityConditions2}
\end{eqnarray}
\end{subequations}
In the extremal case the above expressions reduce to
\begin{eqnarray*}
&& \beta^2+\chi_* <0 \qquad \mbox{ if } r_* \in \big(m,\tfrac{1}{4}(3+\sqrt{5})m\big), \\
&& \beta^2 + \chi_* \geq 0 \qquad \mbox{ if } r_* \in \big[ \tfrac{1}{4}(3+\sqrt{5})m,  \infty),
\end{eqnarray*}
where $\tfrac{1}{4}(3+\sqrt{5})\approx 1.309$. Finally, one has that
\[
\zeta + (\beta^2 + \chi_*) \omega_* = \frac{1}{\varrho_* r_*}(2r^2_* - 3q^2),
\]
so that $\omega''_*\geq 0$ if $r_* \geq \sqrt{\tfrac{3}{2}}|q|$ where
$\sqrt{\tfrac{3}{2}}\approx 1.225$.

\subsection{The curve deviation equation along the critical curve}
\label{Section:DeviationCriticalCurve}
The simplest situation on which to analyse the solutions of the
deviation equation \eqref{enemy} is  along the critical curve
$\bar{r}=r_\circledast$. Letting $\chi_\circledast \equiv
\chi(r_\circledast)$, a direct computation shows that
\begin{eqnarray*}
\chi_\circledast >0, & \zeta_\circledast>0, & \mbox{ for } m^2>q^2, \\
\chi_\circledast=0, & \zeta_\circledast>0, & \mbox{ for } m^2=q^2.
\end{eqnarray*}
With this information at hand, the solution to the deviation equation
can be found explicitly to be given by
\[
\omega(\bar{\tau}) =\left( \omega_\circledast +
  \frac{\zeta_\circledast}{\beta^2_\circledast + \chi_\circledast}
\right)\cosh\left(\sqrt{\beta^2_\circledast + \chi_\circledast}\bar{\tau}
\right) -  \frac{\zeta_\circledast}{\beta^2_\circledast + \chi_\circledast}. 
\]
This solution satisfies $\omega(\bar{\tau})>0$, $\omega'(\bar{\tau})>0$ for all $\bar{\tau}\geq
0$. Thus, no conjugate points arise in the critical
curve when regarded as a curve on the physical Reissner-Norsdstr\"om
spacetime $(\tilde{\mathcal{M}},\tilde{\bm g})$.  In view of future
applications it is important to verify the absence of conjugate points
in the conformally rescaled spacetime even at $i^+$. In order to do
this, one has to consider the conformal curve as parametrised by the
unphysical proper time $\tau$. Following the discussion of Section
\ref{Section:PhysicalMetricAdaptedEquations} one has that $\bar{\bm X} = \Theta
\dot{\bm x}$. The relevant deviation vector field is then given by
\[
\partial_\varrho {\bm x} = \partial_\varrho \bar{\bm x} -\dot{\bm
  x} \partial_\rho \tau.
\]
It follows that for $\bar{\bm X}$ and $\bar{Z}$ to
remain linearly independent one requires that $\Theta {\bm
  \epsilon}_{\bm l}(\bar{\bm X},\bar{\bm Z})\neq 0$ along the critical
curve up to (and including) $i^+$. Using expression
\eqref{ConformalFactorReducedPhysical} together with the explicit
expression for $\omega(\bar{\tau})$ obtained in the previous paragraph
one obtains that 
\[
\Theta {\bm \epsilon}_{\bm l}(\bar{\bm X},\bar{\bm Z})\geq
\Theta_\circledast\omega_\circledast>0 
\]
along the critical curve up to, and including $i^+$ so that also no
conjugate points arise on the conformal Reissner-Nordstr\"om manifold.

\subsection{Curves with  $r_\circledast <r_*$}
\label{Subsection:DeviationOutside}
The analysis of solutions of equation \eqref{enemy} with 
for $r_\circledast <r_*$ follows closely the discussion of
\cite{Fri03c}. It is given here for the sake of completeness. The key observation is that the solution to \eqref{enemy} admits the
representation
\begin{equation}
\omega(\bar{\tau}) = \varpi(\bar{\tau}) \left( \omega_* + \left(1-\frac{\varsigma(\bar{\tau})}{\varpi(\bar{\tau})}\right)\zeta \right),
\label{Solutionw}
\end{equation}
where $\varpi(\bar{\tau})$ and $\varsigma(\bar{\tau})$ are the solutions to the auxiliary problems
\begin{eqnarray*}
&& \varpi'' -(\beta^2 +\bar{\chi}) \varpi = 0, \quad \varpi(0)=1, \quad \varpi'(0)=0, \\
&& \varsigma'' -(\beta^2 +\bar{\chi}) \varsigma=-1, \quad \varsigma(0)=1, \quad \varsigma'(0)=0. 
\end{eqnarray*}
That $\omega$ as given by \eqref{Solutionw} is indeed a solution of
\eqref{enemy} with the right initial data can be verified by directed
evaluation. As $r_*>r_\circledast$, it follows from the observation
after equation \eqref{FormulaPsi} that $\beta^2+\bar{\chi}>0$ along the
conformal curves under consideration.

\medskip
Using the equation for $\varpi$ and the initial data $\varpi(0)=1$ one sees
that $\varpi''(0)>0$ so that $\varpi$ has a local minimum at
$\bar{\tau}=0$. Thus, at least for positive values of $\bar{\tau}$
close to $0$ one has that $\varpi$ must be increasing. Furthermore, as
$\bar{\chi}>0$ for $r>r_\circledast$, one finds that $\beta^2 \varpi \leq \varpi''$. This
last differential inequality can be integrated to yield $\varpi\geq \cosh(\beta
\bar{\tau})$. One concludes that $\varpi$ is increasing for all
$\bar{\tau}\geq 0$. A similar argument with the function $\eta\equiv \varpi-\varsigma$ satisfying the equation
\[
\eta''-(\beta^2+\bar{\chi}) \eta =1, \qquad \eta(0)=0, \qquad \eta'(0)=0,
\]
shows that $\varpi \geq \varsigma$ for all $\bar{\tau}\geq 0$. 

\medskip
The information obtained in the previous paragraph will be used to
estimate the term $1-\varsigma/\varpi$ in \eqref{Solutionw}. Due to the
monotonicity of $\varpi$ one has that $1-\varsigma/\varpi \geq 0$. Using that $\varpi\neq 0$
for $\bar{\tau}\geq 0$, it follows that there exists a function
$f(\bar{\tau})$ such that $\varsigma=f \varpi$. It can be readily seen that $f$
satisfies the equation
\[
f'' + 2 \frac{\varpi'}{\varpi} f' = -\frac{1}{\varpi}, \qquad f(0)=1, \qquad f'(0)=1,
\]
whose solution can be written as 
\[
f = 1 - \int_0^{\bar{\tau}} \left(\frac{1}{\varpi^2} \int_0^{s} \varpi \,\mbox{d}s' \right) \mbox{d}s.
\]
From the latter one obtains the chain of inequalities
\begin{eqnarray*}
&& 0 \leq 1-\frac{\varsigma}{\varpi} = 1-f= 
\int_0^{\bar{\tau}} \left(\frac{1}{\varpi^2} \int_0^{s} \varpi \,\mbox{d}s' \right) \mbox{d}s \\
&& \hspace{5cm}\leq \int_0^{\bar{\tau}} \frac{s}{\varpi}\, \mbox{d} s \leq \int_0^{\bar{\tau}} \frac{s}{\cosh (\beta s)} \mbox{d}s \leq 2 \int_0^{\bar{\tau}} s  e^{-2\beta s} \mbox{d} s.
\end{eqnarray*}
Hence, one has that
\[
0 \leq 1-\frac{\varsigma}{\varpi} \leq \frac{2}{\beta^2}\left(1-(\beta \bar{\tau}+1) e^{-\beta \bar{\tau}}\right) \leq \frac{2}{\beta^2}.
\]
For the values of $r_*>r_\circledast$ for which $\zeta<0$ a
direct computation shows that $-1 <2\zeta/\beta^2< 0$.  Thus,
\[
\omega_* + \left(1 -\frac{\varsigma}{\varpi}\right) \zeta \geq
\omega_* + \frac{2\zeta}{\beta^2} > \omega_* -1 .
\]
For the values of $r_*>r_\circledast$ for which $\zeta>0$  one readily has that 
\[
\omega_* + \left(1 -\frac{\varsigma}{\varpi}\right) \zeta \geq
\omega_*.
\] 
Hence, in both cases using equation \eqref{DeviationData1} one concludes that
\[
\omega_* + \left(1 -\frac{v}{u}\right) \zeta \geq \omega_*-1 =
\frac{1}{\varrho_*}(m^2-q^2 +4\varrho_* m)> 0.
\]

In order to conclude the argument we consider $\Theta \omega$ where
$\Theta$ is given by equation
\eqref{ConformalFactorReducedPhysical}. Putting together the
discussion from the previous paragraphs one has that
\begin{eqnarray*}
&& \Theta \omega = \Theta \varpi \left( \omega_* + \left(1 -\frac{\varsigma}{\varpi}\right)
  \zeta \right) \geq \frac{\Theta \varpi}{\varrho_*}(m^2-q^2 +4\varrho_* m)
\\
&& \phantom{\Theta \omega = \Theta \varpi \left( \omega_* + \left(1 -\frac{\varsigma}{\varpi}\right)
  \zeta \right)} \geq \Theta_*
\frac{\cosh(\bar{\tau})}{\varrho_*\cosh^2(\tfrac{1}{2}\beta\bar{\tau})}(m^2-q^2
+4\varrho_* m) \\
&&\phantom{\Theta \omega = \Theta \varpi \left( \omega_* + \left(1 -\frac{\varsigma}{\varpi}\right)
  \zeta \right)} \geq \frac{\Theta_*}{\varrho_*}(m^2-q^2 +4\varrho_* m)>0.
\end{eqnarray*}
This lower bound holds even in the limit $\bar{\tau}\rightarrow
\infty$ (i.e. $\tau=0$). By the same considerations made in Section
\ref{Section:DeviationCriticalCurve}, it follows that the congruence of
conformal curves remains free of conjugate points even at null
infinity.

\subsection{Curves with  $r_*<r_\circledast$}
The observation made in Section \ref{Section:CGInside} that for
$r_*<r_\circledast$, curves with a certain value of $r_*$ can have a
lower value of the turning point $\bar{r}=\alpha_2$ than curves
starting closer to the horizon, shows that curves in our congruence of
conformal curves must intersect at some point. This is because curves
with constant coordinate value $r$ are timelike, respectively, in the
regions III of the non-extremal case and the regions II of the
extremal case. The question is then: how long does this part of the
congruence exist without the existence of conjugate points? In the
sequel it will be shown that the congruence of conformal curves is
free of conjugate points up, and including, the horizon. More precisely one has that:

\begin{proposition}
\label{Proposition:NonExistenceConjugatePoints}
Assume that $q^2\leq \tfrac{8}{9}m^2$ and $r_*\in (r_+,r_\circledast)$. Then, the solutions to equation
\eqref{enemy} with initial data given by
\eqref{DeviationData1}-\eqref{DeviationData2} satisfy $\omega>0$ for $\bar{\tau}\in[0,\bar{\tau}_{\mathscr{H}^+}]$.
\end{proposition}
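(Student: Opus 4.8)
The plan is to show that, along the relevant segment of each curve, the coefficient $\beta^2+\bar{\chi}$ multiplying $\omega$ in \eqref{enemy} is strictly positive while the source $\zeta$ is nonnegative; once this is established, a direct convexity (maximum-principle) argument forces $\omega$ to stay positive. First I would record, from the analysis of Section \ref{Section:CGInside} together with Lemma \ref{Lemma:BehaviourRoots}, that for $r_*\in(r_+,r_\circledast)$ the radial function $\bar{r}(\bar{\tau})$ decreases monotonically from its maximum $r_*$ at $\bar{\tau}=0$ down to the value $r_+$ at $\bar{\tau}=\bar{\tau}_{\mathscr{H}^+}$: the first turning point $\alpha_2$ satisfies $\alpha_2<r_-<r_+$, so the horizon is reached strictly before the minimum of $\bar{r}$. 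Hence on the whole interval $[0,\bar{\tau}_{\mathscr{H}^+}]$ one has $\bar{r}(\bar{\tau})\in[r_+,r_*]\subset(r_-,r_\circledast)$.

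The central step, and the place where the hypothesis $q^2\le\tfrac{8}{9}m^2$ enters, is to control the sign of $\beta^2+\bar{\chi}$. Writing $\bar{\chi}=\bar{r}^{-4}(2m\bar{r}-3q^2)$ from \eqref{FormulaPsi}, one has $\bar{\chi}\ge0$ precisely for $\bar{r}\ge 3q^2/2m$. I would then verify the elementary equivalence $q^2\le\tfrac{8}{9}m^2\iff r_+\ge 3q^2/2m$ (with equality only in the boundary case), which follows by substituting $r_+=m+\sqrt{m^2-q^2}$ and squaring. Consequently $\bar{\chi}(\bar{r})\ge0$ throughout $\bar{r}\in[r_+,r_*]$, and since $\beta^2=4D_*/r_*^2>0$ for $r_*>r_+$, one obtains $\beta^2+\bar{\chi}\ge\beta^2>0$ on the entire segment $[0,\bar{\tau}_{\mathscr{H}^+}]$.

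Next I would check that $\zeta\ge0$ along these curves. Since $\zeta$ is constant along each conformal curve it may be evaluated at $\tau_*$, where it is positive exactly for $r_*$ in the interval $\big(\tfrac32 m-\tfrac12\sqrt{9m^2-8q^2},\,\tfrac32 m+\tfrac12\sqrt{9m^2-8q^2}\big)$. It therefore suffices to confirm the inclusion $(r_+,r_\circledast)\subset\big(\tfrac32 m-\tfrac12\sqrt{9m^2-8q^2},\,\tfrac32 m+\tfrac12\sqrt{9m^2-8q^2}\big)$; the lower containment follows from $r_+\ge m$ and $q^2\le m^2$, while the upper one, $r_\circledast\le\tfrac32 m+\tfrac12\sqrt{9m^2-8q^2}$ with $r_\circledast$ as in \eqref{rCircledast}, reduces after squaring to an inequality valid for all $q^2\le m^2$.

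With both sign conditions in hand the conclusion is routine. Using the data \eqref{DeviationData1}--\eqref{DeviationData2}, namely $\omega_*=r_*/\varrho_*>0$ and $\omega'_*=0$, suppose $\omega$ had a first zero at some $\bar{\tau}_0\in(0,\bar{\tau}_{\mathscr{H}^+}]$. On $[0,\bar{\tau}_0)$ one has $\omega>0$, so \eqref{enemy} gives $\omega''=(\beta^2+\bar{\chi})\omega+\zeta>0$ there; hence $\omega'$ increases from $\omega'_*=0$ and stays positive, so $\omega$ is strictly increasing and $\omega(\bar{\tau}_0)>\omega_*>0$, contradicting $\omega(\bar{\tau}_0)=0$. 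Thus $\omega>0$ (indeed monotonically increasing) on $[0,\bar{\tau}_{\mathscr{H}^+}]$. The main obstacle is precisely the positivity of $\beta^2+\bar{\chi}$: the threshold $q^2=\tfrac{8}{9}m^2$ is exactly where $r_+=3q^2/2m$, so for $q^2>\tfrac{8}{9}m^2$ the coefficient $\bar{\chi}$ (and possibly $\beta^2+\bar{\chi}$) can turn negative before the curve reaches the horizon, and this simple convexity argument would have to be replaced by a more delicate estimate.
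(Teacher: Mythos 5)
Your proof is correct, and it rests on the same key observation as the paper's: the hypothesis $q^2\leq\tfrac{8}{9}m^2$ is exactly what guarantees $3q^2/2m\leq r_+$, hence $\bar{\chi}\geq 0$ and so $\beta^2+\bar{\chi}>0$ for all $\bar{r}\in[r_+,r_*]$, i.e. on the whole interval $[0,\bar{\tau}_{\mathscr{H}^+}]$. Where you diverge is in the execution. The paper disposes of this case by pointing to the argument of Subsection \ref{Subsection:DeviationOutside}, i.e. the representation $\omega=\varpi\big(\omega_*+(1-\varsigma/\varpi)\zeta\big)$ together with the fence $\varpi\geq\cosh(\beta\bar{\tau})$ and the bound $0\leq 1-\varsigma/\varpi\leq 2/\beta^2$; that machinery is built to work for either sign of $\zeta$ (which is genuinely needed for large $r_*$ in the exterior case) and to control $\Theta\omega$ up to null infinity. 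You instead make the additional observation that the positivity interval $\big(\tfrac{3}{2}m-\tfrac{1}{2}\sqrt{9m^2-8q^2},\tfrac{3}{2}m+\tfrac{1}{2}\sqrt{9m^2-8q^2}\big)$ of $\zeta$ contains all of $(r_+,r_\circledast)$ ---both containments check out for $q^2\leq m^2$--- so that $\zeta>0$ for every curve covered by the proposition. With $\zeta>0$, $\beta^2+\bar{\chi}>0$, $\omega_*>1$ and $\omega'_*=0$, the direct convexity argument (a first zero would contradict $\omega''>0$, hence $\omega'>0$, on the interval where $\omega>0$) closes the proof without any auxiliary solutions or integral estimates. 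This is a legitimate simplification: it buys a shorter, self-contained argument, at the price of being less robust ---it would not extend to the $\zeta<0$ regime or to the statement about $\Theta\omega$ at $\mathscr{I}^+$, neither of which is needed here since, as the paper itself remarks, one only deals with finite $\bar{\tau}$ and $\bar{r}$ and only the positivity of $\omega$ is at stake.
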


The case $\tfrac{8}{9}m^2<q^2<m^2$ will not be considered
here. Instead, we concentrate our attention on the extremal case for
which one can prove the following:

\begin{proposition}
\label{Proposition:NonExistenceConjugatePointsExtremal}
Assume $q^2=m^2$. Then there exists $r_\star\in (r_+,r_\circledast)$
such that for $r_*\in (r_\star,r_\circledast)$,   the solutions to equation
\eqref{enemy} with initial data given by
\eqref{DeviationData1}-\eqref{DeviationData2} satisfy $\omega>0$ for $\bar{\tau}\in[0,\bar{\tau}_{\mathscr{H}^+}]$.
\end{proposition}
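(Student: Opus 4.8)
The plan is to read \eqref{enemy} as a one–dimensional forced linear oscillator $\omega''=(\beta^2+\bar\chi)\omega+\zeta$ and to prove $\omega>0$ on $[0,\bar\tau_{\mathscr{H}^+}]$ by splitting this interval according to the sign of the coefficient $\beta^2+\bar\chi$: an \emph{expanding} part where it is nonnegative (hyperbolic, harmless) and an \emph{oscillatory} part where it is negative (the only source of difficulty). First I would record the relevant sign data in the extremal case for $r_*$ near $r_\circledast=\tfrac32 m$. From \eqref{ZetaExtremal} one has $\zeta>0$ because $r_*<2m$; from \eqref{DeviationData1}--\eqref{DeviationData2} one has $\omega_*>1>0$ and $\omega_*'=0$; and by \eqref{PositivityConditions2} one has $\beta^2+\chi_*\geq0$ as soon as $r_*\geq\tfrac14(3+\sqrt5)m\approx1.309m$. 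Along such a curve $\bar r$ decreases monotonically from $r_*$ to $r_+=m$ on $[0,\bar\tau_{\mathscr{H}^+}]$ (by \eqref{ReducedSqrtRoots} and Lemma \ref{Lemma:BehaviourRoots}, since the minimum at $\alpha_2<m$ lies beyond the horizon), and $\bar\chi$ in \eqref{FormulaPsi} is increasing in $\bar r$ on $(0,2m)$ by \eqref{SignDerivativeChi}; hence $\beta^2+\bar\chi$ is monotonically decreasing in $\bar\tau$, changing sign exactly once, at some $\bar r_c\in(m,r_\circledast)$ reached at a time $\bar\tau_1$.

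On the expanding part $[0,\bar\tau_1]$, where $\beta^2+\bar\chi\geq0$, I would use the elementary invariance: whenever $\omega>0$ and $\omega'\geq0$ one has $\omega''=(\beta^2+\bar\chi)\omega+\zeta>0$, so $\omega'$ cannot return to zero and $\omega$ stays increasing and positive. Moreover $\omega''\geq\beta^2\omega$ integrates, as in Section \ref{Subsection:DeviationOutside}, to $\omega\geq\omega_*\cosh(\beta\bar\tau)$ and $\omega'\geq\beta\omega_*\sinh(\beta\bar\tau)$. The decisive observation is that as $r_*\to r_\circledast^-$ the curve lingers near $r_\circledast$ and $\bar\tau_1\to\infty$ — this is precisely the divergence $\beta\bar\tau_{\mathscr{H}^+}\sim-\ln\epsilon$ of Section \ref{Section:TimelikeInfinity} — so both $\omega(\bar\tau_1)$ and $\omega'(\bar\tau_1)$ can be made as large as desired by taking $r_*$ close enough to $r_\circledast$.

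On the oscillatory part $[\bar\tau_1,\bar\tau_{\mathscr{H}^+}]$, where $\beta^2+\bar\chi<0$, I would compare with a constant–coefficient oscillator. Here $-(\beta^2+\bar\chi)\leq 1/m^2-\beta^2=:C$, the extreme value being attained at $\bar r=m$ where $\bar\chi=-1/m^2$, so as long as $\omega>0$ one has $\omega''+C\omega\geq\zeta>0$. Let $\phi$ solve $\phi''+C\phi=0$ with $\phi(\bar\tau_1)=\omega(\bar\tau_1)$, $\phi'(\bar\tau_1)=\omega'(\bar\tau_1)$; then $e=\omega-\phi$ satisfies $e''+Ce\geq0$ with vanishing Cauchy data at $\bar\tau_1$, so by the Duhamel representation, whose Green's function $\sin(\sqrt C(\bar\tau-s))/\sqrt C$ is nonnegative, one has $e\geq0$ on any subinterval of length at most $\pi/\sqrt C$. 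Since $\phi(\bar\tau_1),\phi'(\bar\tau_1)\geq0$, writing $\phi=\sqrt{A^2+(B/\sqrt C)^2}\,\cos(\sqrt C(\bar\tau-\bar\tau_1)-\delta)$ with $\delta=\arctan(\omega'(\bar\tau_1)/(\sqrt C\,\omega(\bar\tau_1)))\in[0,\tfrac\pi2)$, the first zero of $\phi$ occurs at phase $\tfrac\pi2+\delta\in[\tfrac\pi2,\pi)$, which approaches $\pi$ as the large ratio $\omega'(\bar\tau_1)/\omega(\bar\tau_1)$ from the previous step pushes $\delta\to\tfrac\pi2$. Hence $\omega\geq\phi>0$ on the whole oscillatory part provided its duration $\bar\tau_{\mathscr{H}^+}-\bar\tau_1$ stays below the threshold $\pi/\sqrt C=\pi m/\sqrt{1-\beta^2m^2}$, which is finite (of order $3.5m$ in the limit).

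Finally, to close the existence statement I would let $r_*\to r_\circledast^-$: the expanding part absorbs all of the divergence of $\bar\tau_{\mathscr{H}^+}$, whereas the oscillatory duration $\bar\tau_{\mathscr{H}^+}-\bar\tau_1$ converges to the descent time of the limiting critical curve from $\bar r_c$ down to $m$, a convergent integral by \eqref{ReducedSqrtRoots} because the double root sits at $r_\circledast>\bar r_c$. It then only remains to verify that this limiting descent time lies below $\pi/\sqrt C$, whereupon positivity of $\omega$ up to the horizon persists for all $r_*$ in some interval $(r_\star,r_\circledast)$. The main obstacle is exactly this last quantitative comparison: in the oscillatory region the maximum principle for \eqref{enemy} fails, and the argument lives on the competition between the bounded time spent near the horizon and the oscillation half–period $\pi/\sqrt C$. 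It is here that the extremal case is favourable — in contrast with $q^2<m^2$, where Section \ref{Section:TimelikeInfinity} shows the critical curve becomes null at $i^+$ — because the bounded coefficient $|\beta^2+\bar\chi|\leq1/m^2$ together with the large growth of $\omega$ accumulated during the lingering phase leaves enough room to keep $\omega$ positive, while the residual region near the singularity (those $r_*\in(r_+,r_\star)$) gives rise to the excluded timelike tube of the main theorem.
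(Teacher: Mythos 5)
Your route is genuinely different from the paper's: you split $[0,\bar\tau_{\mathscr{H}^+}]$ at the unique sign change of $\beta^2+\bar\chi$ (your monotonicity argument for this, via \eqref{SignDerivativeChi} and the monotone decrease of $\bar r$, is correct), accumulate growth of $\omega$ during the lingering phase near $r_\circledast$, and then try to survive the oscillatory phase by a constant-coefficient Sturm comparison. The paper instead reparametrises \eqref{enemy} by $\bar r$, proves through a first-integral identity that the second $\bar r$-derivative of $\omega$ is bounded above by its initial value (Lemma \ref{Lemma:SignDDomega}), deduces that the maximum $\omega_\wedge$ diverges as $r_*\to r_\circledast^-$, and bootstraps a first integral from the maximum down to $r_+$. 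Both arguments run on the same fuel --- the divergence produced by the near-double root at $r_\circledast$ --- but yours, as written, does not close.

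There are two concrete errors and one missing decisive step. First, $\omega''\geq\beta^2\omega$ is false on the expanding part: for $r_*<r_\circledast$ one has $\bar r<r_\circledast$ throughout, hence $\bar\chi<0$ throughout by \eqref{FormulaPsi}, so $\beta^2+\bar\chi<\beta^2$; the argument of Section \ref{Subsection:DeviationOutside} you invoke uses $\bar\chi>0$, which holds only for $r_*>r_\circledast$. This is repairable (near $r_\circledast$ one has $\bar\chi\approx\chi_\circledast=0$, so exponential growth at a slightly reduced rate survives), but the rate matters for what follows. Second, and fatally, the claim that $\delta\to\pi/2$ is wrong: during the lingering phase $\omega$ and $\omega'$ grow exponentially at the same rate, so $\omega'(\bar\tau_1)/\omega(\bar\tau_1)$ stays bounded --- indeed a Riccati analysis of $u=\omega'/\omega$ shows $u$ adiabatically tracks $\sqrt{\beta^2+\bar\chi}$, which tends to zero at $\bar\tau_1$, so $\delta$ is small rather than close to $\pi/2$. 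The first zero of your comparison oscillator then sits at time $(\pi/2+\delta)/\sqrt C\approx 1.8\hbox{--}2.3\,m$ after $\bar\tau_1$, whereas the limiting descent time from $\bar r_c\approx 1.26\,m$ to the horizon evaluates to roughly $2.2\,m$: the constant-coefficient bound $-(\beta^2+\bar\chi)\leq C$, attained only at the horizon itself, is too lossy and the comparison does not demonstrably succeed. What would work is a variable-coefficient (Pr\"ufer-type) phase estimate, $\int_{\bar\tau_1}^{\bar\tau_{\mathscr{H}^+}}\sqrt{-(\beta^2+\bar\chi)}\,\mathrm{d}\bar\tau\approx 1.1<\pi/2$, but that is precisely the quantitative verification you explicitly defer, and it is the entire content of the proposition; without it the proof is not complete.
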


The proof of this propositions is given in Sections
\ref{Section:SimpleCase} and \ref{Section:DifficultCase},
respectively. Clearly, the result of Proposition
\ref{Proposition:NonExistenceConjugatePointsExtremal} (which includes
the extremal case) is much more restrictive than that of Proposition
\ref{Proposition:NonExistenceConjugatePoints}. In view of the discussion of Section \ref{Section:TimelikeInfinity},
the extremal Reissner-Nordstr\"om spacetime is the case of most
relevance from the perspective of conformal geometry. In this respect,
the result given in Proposition
\ref{Proposition:NonExistenceConjugatePoints} will be sufficient for
future applications to be considered elsewhere. Numerical evaluation
of the solutions of equation \eqref{enemy} suggest, nevertheless, that the conclusions of Proposition
\ref{Proposition:NonExistenceConjugatePoints} can be extended to the
whole range $q^2\leq m^2$ and $r_*\in (r_+,r_\circledast)$ ---so that,
in particular, Proposition
\ref{Proposition:NonExistenceConjugatePointsExtremal} could be
superseded. This proof would require an analysis which would increase
considerably the length of this article. 

\medskip
 For simplicity of the
presentation, \emph{in the remainder of the subsection it is always
assumed that $\bar{r}\in [\alpha_2,r_*]$. For these values of
$\bar{r}$ one has that $\bar{r}'<0$ with $\bar{r}'=0$ only at
$\bar{r}=r_*,\,\alpha_2$.} The amount of technical details in the analysis of the solutions to equation
\eqref{enemy} depends on whether the value of the charge, $q$, is close or
not to the extremal value. In order to characterise the values of $q$
requiring a more careful treatment, it is recalled that $\bar{\chi}\geq 0$ if
$\bar{r}\geq 3q^2/2m$.  As $\bar{r}$ is monotonically decreasing, it follows that
for $\bar{r}\in[r_*,r_+]$, $\bar{\chi}\geq0$ if and only if 
\[
\frac{3q^2}{2m} \leq r_+=m+ \sqrt{m^2-q^2}.
\]
The above inequality is saturated if $q^2=\tfrac{8}{9}m^2$. Our
subsequent discussion is split depending on whether $q^2$ is below or
above the critical value found in the previous lines.

\subsubsection{Proof of Proposition \ref{Proposition:NonExistenceConjugatePoints}}
\label{Section:SimpleCase}

As already discussed, if $q^2\leq \tfrac{8}{9}m^2$, one has that
$\bar{\chi}\geq 0$ for $\bar{r}\in[r_+,r_*]$. It follows then by an argument
similar to the one used in Subsection
\ref{Subsection:DeviationOutside} that $\omega>0$ for $\bar{\tau}\in
[0,\bar{\tau}_{\mathscr{H}^+}]$. The full details will not be
provided, but it is noticed that, in fact, the task in this case is
simpler as one is dealing with finite values of $\bar{\tau}$ and
$\bar{r}$. Hence, it is only necessary to ensure the positivity of
$\omega$ and not that of $\Theta\omega$.

\subsubsection{Proof of Proposition \ref{Proposition:NonExistenceConjugatePointsExtremal}}
\label{Section:DifficultCase}

All through out it is assumed that $q^2=m^2$. In this case, the analysis of the previous sections show that there
are intervals of $\bar{\tau}$ for which $\bar{\chi}<0$, so that the
arguments of Subsection \ref{Subsection:DeviationOutside}, do not
apply for the whole interval $ [0,\bar{\tau}_{\mathscr{H}^+}]$. Hence,
a more detailed analysis is required. In particular, formula
\eqref{FormulaPsi} shows that in the extremal case $\bar{\chi}$ is always negative
for $\bar{r}\in[\alpha_2,r_*]$. As before, we will restrict our
attention to the behaviour of the conformal curves in the range
$\bar{\tau}\in [0,\bar{\tau}_{\mathscr{H}^+}]$. In such interval
$\bar{r}(\bar{\tau})$ is a monotonic decreasing function of
$\bar{\tau}$ with $\bar{r}(0)=r_*$ and
$\bar{r}(\bar{\tau}_{\mathscr{H}^+})=r_+$. Hence, for $\bar{\tau}\in
(0,\bar{\tau}_{\mathscr{H}^+}]$ it is convenient to reparametrise
equation \eqref{enemy} in terms of $\bar{r}$. Using the chain rule to
write
\[
\omega' = \bar{r}'\frac{\mbox{d}\omega}{\mbox{d}\bar{r}}, \qquad
\omega''=\bar{r}'\frac{\mbox{d}}{\mbox{d}\bar{r}}\left( \bar{r}'\frac{\mbox{d}\omega}{\mbox{d}\bar{r}} \right)
\]
one readily has that the deviation equation \eqref{enemy} implies the equation
\begin{equation}
\bar{r}^{\prime 2}\stackrel{\backprime\backprime}{\omega} + \bar{r}''
\stackrel{\backprime}{\omega} -(\beta^2 + \bar{\chi}) \omega =\zeta,
\label{Equationw:r}
\end{equation}
with
\[
\stackrel{\backprime}{\omega} \equiv
\frac{\mbox{d}\omega}{\mbox{d}\bar{r}}, \qquad
\stackrel{\backprime\backprime}{\omega}\equiv \frac{\mbox{d}^2 \omega}{\mbox{d}\bar{r}^2}
\]
and where $\bar{\chi}$ is now regarded as a function of $\bar{r}$, and
$\bar{r}^{\prime 2}$ is given by equation
\eqref{ReducedSqrtRoots}. Notice that as $\bar{r}'=0$ at
$\bar{r}=r_*,\,\alpha_2$ equation \eqref{Equationw:r} is formally
singular. An explicit formula for $\bar{r}''$ in terms of $\bar{r}$ can be found using
\[
\bar{r}'' = \frac{1}{2} \frac{1}{\bar{r'}} \frac{\mbox{d}
}{\mbox{d}\bar{\tau}}\left(\bar{r}^{\prime2}\right) = \frac{1}{2}
\frac{\mbox{d}}{\mbox{d}\bar{r}}\left( r^{\prime2}\right). 
\]
One obtains
\begin{eqnarray}
&& \bar{r}'' = \frac{\beta^2}{2\bar{r}^3} \big(
(\bar{r}-\alpha_1)(\bar{r}-\alpha_2)(\bar{r}-\alpha_3)(2\bar{r}_*-\bar{r})
+ \bar{r}(\bar{r}-\bar{r}_*)(\bar{r}-\alpha_2)(\bar{r}-\alpha_3)
\nonumber \\
&&
\hspace{3cm}+\bar{r}(\bar{r}-\bar{r}_*)(\bar{r}-\alpha_1)(\bar{r}-\alpha_3)+\bar{r}(\bar{r}-\bar{r}_*)(\bar{r}-\alpha_1)(\bar{r}-\alpha_2)
\big). \label{rprimeprimeExpanded}
\end{eqnarray}
In particular, using the information from Lemma
\ref{Lemma:BehaviourRoots} one can readily conclude that
\[
\bar{r}''_*<0, \qquad \bar{r}''_2>0
\]
where $\bar{r}''_*\equiv \bar{r}''(r_*)$ and $\bar{r}''_2\equiv
\bar{r}''(\alpha_2)$. Thus, one concludes that there exists $r_!\in
(\alpha_2,r_\circledast)$ such that $\bar{r}''_!\equiv
\bar{r}''(r_!)=0$. It can be verified that this zero of
$\bar{r}''$ in $(\alpha_2,r_\circledast)$ is unique. An analysis of
formula \eqref{rprimeprimeExpanded}  yields the bounds
\[
m< r_! <\tfrac{11}{10}m.
\]

\medskip
The initial data for equation \eqref{Equationw:r} is given by
\begin{subequations}
\begin{eqnarray}
&& \omega(r_*)= \omega_*\label{InitialEquationw:rA}\\
&& \stackrel{\backprime}{\omega}_*\equiv \stackrel{\backprime}{\omega}(r_*) =
\lim_{\bar{\tau}\rightarrow 0} \frac{\omega'}{\bar{r}'} =
\lim_{\bar{\tau}\rightarrow 0} \frac{\omega''}{\bar{r}''} = \frac{\omega''_*}{\bar{r}''_*} , \label{InitialEquationw:rB}
\end{eqnarray}
\end{subequations}
where
\[
\omega''_* = \zeta + (\beta^2 + \chi_*)\omega_*, \qquad \bar{r}''_* = \frac{\beta^2}{2\bar{r}_*^2}(r_*-\alpha_1)(r_*-\alpha_2)(r_*-\alpha_3).
\]
Using that $\bar{r}''_*<0$, and noticing that $\omega''_*>0$ if $r_*\geq
\sqrt{\tfrac{3}{2}}m$ one concludes that
\[
\stackrel{\backprime}{\omega}_*<0, \qquad \mbox{if} \qquad 
r_*\in ( \sqrt{\tfrac{3}{2}}m,r_\circledast). 
\]
Alternatively, one can compute $\stackrel{\backprime}{\omega}_*$
directly by evaluating equation \eqref{Equationw:r} on
$r_*$. Similarly, differentiating \eqref{Equationw:r} with respect to
$\bar{r}$ and evaluating on $r_*$ one finds that
\[
\stackrel{\backprime\backprime}{\omega}_*<0, \qquad r_*\in
(r_+,r_\circledast).
\]
In particular, in the extremal case one has the following expressions:
\begin{equation}
\stackrel{\backprime}{\omega}_* = \frac{(3m^2
  -2r_*^2)}{(3m-2r_*)(m-r_*)^2}, \qquad
\stackrel{\backprime\backprime}{\omega}_* = \frac{2m(r_*-2m)}{r_*(3m-2r_*)(m-r_*)^2}. 
\label{InitialDDomega}
\end{equation}
Notice, in particular, that both
$\stackrel{\backprime}{\omega}_*,\,\stackrel{\backprime\backprime}{\omega}_*\rightarrow
\infty$ as $r_*\rightarrow r_\circledast=\tfrac{3}{2}m$. 

\medskip
\noindent
\textbf{Analysis of $\stackrel{\backprime\backprime}{\omega}$.} As it
will be seen in the sequel, the proof of Proposition
\ref{Proposition:NonExistenceConjugatePointsExtremal} requires a
knowledge of the sign of $\stackrel{\backprime\backprime}{\omega}$. In
order to analyse this, it is convenient to consider a \emph{first
integral} of equation \eqref{Equationw:r}. Multiplying
\eqref{Equationw:r} by $\stackrel{\backprime}{\omega}$ one readily
obtains that
\[
\tfrac{1}{2}\bar{r}^{\prime 2} ( \stackrel{\backprime}{\omega}{}^2)^\backprime + \bar{r}'' \stackrel{\backprime}{\omega}{}^2 -\tfrac{1}{2} (\beta^2 + \bar{\chi}) (\omega^2)^\backprime = \zeta \stackrel{\backprime}{\omega}.  
\]
Integrating with respect to $\bar{r}\in[r_+,r_*]$ leads to 
\[
 \tfrac{1}{2}\int^{r_*}_{\bar{r}}\bar{r}^{\prime 2} ( \stackrel{\backprime}{\omega}{}^2)^\backprime \mbox{d}s + \int^{r_*}_{\bar{r}}\bar{r}'' \stackrel{\backprime}{\omega}{}^2 \mbox{d}s -\tfrac{1}{2} \int^{r_*}_{\bar{r}}(\beta^2 + \bar{\chi}) (\omega^2)^\backprime  \mbox{d}s= \zeta \int^{r_*}_{\bar{r}}\stackrel{\backprime}{\omega} \mbox{d}s.
\]
Integration by parts in the first and third terms yields
\[
\tfrac{1}{2} \bar{r}^{\prime 2} \stackrel{\backprime}{\omega}{}^2 \bigg|^{r_*}_{\bar{r}} - \tfrac{1}{2} \int_{\bar{r}}^{r_*} (\bar{r}^{\prime 2})^\backprime \stackrel{\backprime}{\omega}{}^2 \mbox{d}s + \int_{\bar{r}}^{r_*} \bar{r}'' \stackrel{\backprime}{\omega}{}^2 \mbox{d}s - \tfrac{1}{2} (\beta^2 + \bar{\chi})\omega^2 \bigg|_{\bar{r}}^{r_*} + \tfrac{1}{2} \int_{\bar{r}}^{r_*} \stackrel{\backprime}{\bar{\chi}} \omega^2 \mbox{d}s = \zeta (\omega_* -\omega).
\]
Recalling that $\bar{r}^{\prime 2}_*=0$ and that $\bar{r}'' = \tfrac{1}{2} (\bar{r}^{\prime 2})^\backprime$, this last expression reduces to
\begin{equation}
-\tfrac{1}{2} \bar{r}^{\prime 2} \stackrel{\backprime}{\omega}{}^2 + \tfrac{1}{2} (\beta^2 + \bar{\chi}) \omega^2 -\tfrac{1}{2}(\beta^2 + \bar{\chi}_*) \omega^2_* + \tfrac{1}{2} \int^{r_*}_{\bar{r}} \stackrel{\backprime}{\bar{\chi}} \omega^2 \mbox{d}s = \zeta(\omega_*-\omega). 
\label{FirstIntegral}
\end{equation}
This equation will be used, in the sequel to prove the following result:

\begin{lemma}
\label{Lemma:SignDDomega}
The solution, $\omega$, of equation \eqref{Equationw:r} with initial
data given by \eqref{InitialEquationw:rA} and
\eqref{InitialEquationw:rB}, $r_*\in(r_+,r_\circledast)$ satisfies
$\stackrel{\backprime\backprime}{\omega}<\stackrel{\backprime\backprime}{\omega}_*$ for $r\in
(\alpha_2,r_*)$.
\end{lemma}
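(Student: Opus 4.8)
The plan is to reduce the second-order equation \eqref{Equationw:r} to a first-order statement about the single auxiliary quantity
\[
J(\bar{r})\equiv\bar{r}^{\prime 2}\big(\stackrel{\backprime\backprime}{\omega}-\stackrel{\backprime\backprime}{\omega}_*\big),
\]
and to show that $J<0$ on $(\alpha_2,r_*)$. By Lemma \ref{Lemma:BehaviourRoots} all four roots of the right-hand side of \eqref{ReducedSqrtRoots} lie outside $(\alpha_2,r_*)$, so $\bar{r}^{\prime 2}>0$ there and the sign of $J$ coincides with that of $\stackrel{\backprime\backprime}{\omega}-\stackrel{\backprime\backprime}{\omega}_*$; hence the lemma is \emph{equivalent} to $J<0$. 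Solving \eqref{Equationw:r} for $\bar{r}^{\prime 2}\stackrel{\backprime\backprime}{\omega}$ gives the closed expression
\[
J=\zeta+(\beta^2+\bar{\chi})\omega-\bar{r}''\stackrel{\backprime}{\omega}-\bar{r}^{\prime 2}\stackrel{\backprime\backprime}{\omega}_*,
\]
and, since $\bar{r}^{\prime 2}_*=0$ while $\bar{r}''_*\stackrel{\backprime}{\omega}_*=\omega''_*=\zeta+(\beta^2+\chi_*)\omega_*$ by \eqref{InitialEquationw:rB}, one checks immediately that $J(r_*)=0$.

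First I would differentiate this expression in $\bar{r}$. The structural input is the identity $\bar{r}''=\tfrac{1}{2}(\bar{r}^{\prime 2})^\backprime$ established above \eqref{rprimeprimeExpanded}, which gives $(\bar{r}^{\prime 2})^\backprime=2\bar{r}''$; writing $\bar{r}''\stackrel{\backprime\backprime}{\omega}=(\bar{r}''/\bar{r}^{\prime 2})J+\bar{r}''\stackrel{\backprime\backprime}{\omega}_*$ and collecting terms, one finds that $J$ obeys the linear equation
\[
J^\backprime+\frac{\bar{r}''}{\bar{r}^{\prime 2}}\,J=S,\qquad S\equiv\stackrel{\backprime}{\bar{\chi}}\,\omega+\big[(\beta^2+\bar{\chi})-(\bar{r}'')^\backprime\big]\stackrel{\backprime}{\omega}-3\bar{r}''\stackrel{\backprime\backprime}{\omega}_*.
\]
Because $\bar{r}''/\bar{r}^{\prime 2}=\tfrac{1}{2}(\ln\bar{r}^{\prime 2})^\backprime$, the integrating factor is exactly $|\bar{r}'|$ and the equation collapses to $(|\bar{r}'|J)^\backprime=|\bar{r}'|\,S$. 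Using $|\bar{r}'|J\to0$ as $\bar{r}\to r_*^-$ and integrating downward from $r_*$ yields
\[
|\bar{r}'(\bar{r})|\,J(\bar{r})=-\int_{\bar{r}}^{r_*}|\bar{r}'(s)|\,S(s)\,\mathrm{d}s,
\]
so that the claim $J<0$ on $(\alpha_2,r_*)$ reduces to the single positivity statement $\int_{\bar{r}}^{r_*}|\bar{r}'|\,S\,\mathrm{d}s>0$, for which it is sufficient (though more than necessary) to establish $S>0$.

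\emph{The main obstacle is the sign of $S$}, whose three contributions compete. In the extremal case $\stackrel{\backprime}{\bar{\chi}}>0$ on $(\alpha_2,r_*)$ by \eqref{SignDerivativeChi} (here $\bar{r}<r_\circledast=\tfrac{3}{2}m<2m=2q^2/m$) and $\stackrel{\backprime\backprime}{\omega}_*<0$ by \eqref{InitialDDomega}, but $\bar{r}''$ changes sign at the unique point $r_!$ with $m<r_!<\tfrac{11}{10}m$ from \eqref{rprimeprimeExpanded}, so the favourable term $-3\bar{r}''\stackrel{\backprime\backprime}{\omega}_*$ is positive only for $\bar{r}<r_!$. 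Controlling the remaining two terms forces one to track the signs of $\omega$ and $\stackrel{\backprime}{\omega}$: positivity of $\omega$—the very property one is ultimately building toward—is propagated by continuity from $\omega_*>0$ on a maximal subinterval abutting $r_*$, on which the argument is run, while the sign and magnitude of $\stackrel{\backprime}{\omega}$ are controlled through the first integral \eqref{FirstIntegral}, whose integral term $\tfrac{1}{2}\int_{\bar{r}}^{r_*}\stackrel{\backprime}{\bar{\chi}}\omega^2\,\mathrm{d}s$ is nonnegative precisely because $\stackrel{\backprime}{\bar{\chi}}>0$. I expect the genuinely delicate region to be a neighbourhood of $r_!$, where the favourable term vanishes and the positivity of the integrated quantity $\int_{\bar{r}}^{r_*}|\bar{r}'|S\,\mathrm{d}s$ must be extracted from \eqref{FirstIntegral} together with the explicit extremal forms of $\bar{\chi}$, $(\bar{r}'')^\backprime$ and the roots $\alpha_1,\alpha_2,\alpha_3$, rather than from termwise sign considerations.
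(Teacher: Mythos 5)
Your reduction is algebraically correct: with $J\equiv\bar{r}^{\prime 2}(\stackrel{\backprime\backprime}{\omega}-\stackrel{\backprime\backprime}{\omega}_*)$ one does get $J(r_*)=0$, the linear equation $J^\backprime+(\bar{r}''/\bar{r}^{\prime 2})J=S$ with your $S$, the integrating factor $|\bar{r}'|$, and hence the equivalence of the lemma with $\int_{\bar{r}}^{r_*}|\bar{r}'|S\,\mathrm{d}s>0$ on $(\alpha_2,r_*)$. This is a genuinely different route from the paper, which argues by contradiction: assuming $\stackrel{\backprime\backprime}{\omega}(r_\lightning)=\stackrel{\backprime\backprime}{\omega}_*$, it solves \eqref{Equationw:r} for $(\stackrel{\backprime}{\omega}_\lightning)^2$, substitutes into the first integral \eqref{FirstIntegral}, and obtains a quadratic inequality $a_2\omega_\lightning^2+a_1\omega_\lightning+a_0>0$ whose coefficients are explicit rational functions of $r_\lightning$; showing $a_2<0$ and $a_1^2-4a_2a_0<0$ gives the contradiction for \emph{every real value} of $\omega_\lightning$, so no sign information on $\omega$ or $\stackrel{\backprime}{\omega}$ is ever needed.

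That last point is precisely where your proposal has a genuine gap. You have not proved the positivity of $\int_{\bar{r}}^{r_*}|\bar{r}'|S\,\mathrm{d}s$; you have only renamed the difficulty and listed the obstructions. Concretely: (i) the middle term $\bigl[(\beta^2+\bar{\chi})-(\bar{r}'')^\backprime\bigr]\stackrel{\backprime}{\omega}$ is uncontrolled, because $\stackrel{\backprime}{\omega}$ changes sign on the interval in the cases of interest --- the paper's subsequent argument explicitly works with a local maximum $r_\wedge$ of $\omega$ in $(r_+,r_*)$ --- and the bracket itself changes sign; (ii) the favourable term $-3\bar{r}''\stackrel{\backprime\backprime}{\omega}_*$ vanishes at $r_!$ and is unfavourable on one side of it, so a termwise sign argument for $S>0$ fails, as you acknowledge; (iii) your plan to propagate $\omega>0$ by a bootstrap from $r_*$ is circular as stated, since the positivity of $\omega$ down to the horizon is exactly what Proposition \ref{Proposition:NonExistenceConjugatePointsExtremal} is trying to establish, and its proof in the paper \emph{uses} the present lemma; you would need to set up a single simultaneous continuity argument for both statements and verify it closes, which you do not do. Until the positivity of the integrated source is actually extracted from \eqref{FirstIntegral} and the explicit extremal expressions --- the step you defer in your final sentence --- the proposal is an outline of a strategy, not a proof.
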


\begin{proof}
The proof proceeds by contradiction. Hence, assume that there exists
$\bar{r}=r_\lightning$ such that $  r_* \neq r_\lightning \neq r_!$ and
$\stackrel{\backprime\backprime}{\omega}_\lightning \equiv
\stackrel{\backprime\backprime}{\omega}(r_\lightning)
=\stackrel{\backprime\backprime}{\omega}_*$. Equation
\eqref{Equationw:r} implies that
\[
(\stackrel{\backprime}{\omega}_\lightning)^2 =
\frac{1}{(\bar{r}''_\lightning)^2} \left(\zeta + (\beta^2
  +\chi_\lightning)\omega_\lightning-\bar{r}^{\prime 2}_\lightning\stackrel{\backprime\backprime}{\omega}_*\right)^2.
\] 
Substituting this expression for
$(\stackrel{\backprime}{\omega}_\lightning)^2$ into the first integral
\eqref{FirstIntegral} with $\bar{r}=r_\lightning$ and grouping terms
one obtains
\begin{equation}
a_2 \omega^2_\lightning + a_1 \omega_\lightning + a_0 = (\bar{r}''_\lightning)^2\int_{r_\lightning}^{r_*} \stackrel{\backprime}{\bar{\chi}} \omega^2 \mbox{d}s, 
\label{IntermediateStep}
\end{equation}
with
\begin{eqnarray*}
&& a_2 \equiv (\beta^2+\bar{\chi}_\lightning)\left( (\beta^2+\bar{\chi}_\lightning)\bar{r}^{\prime 2}_\lightning - (\bar{r}'' _\lightning)^2\right),\\
&& a_1 \equiv \left( 2(\beta^2+\bar{\chi}_\lightning)\bar{r}_\lightning^{\prime 4}\stackrel{\backprime\backprime}{\omega}_*+2\zeta(\beta^2 +\bar{\chi}_\lightning)\bar{r}^{\prime2}_\lightning-2\zeta (\bar{r}'' _\lightning)^2\right),\\
&& a_0 \equiv \left((\bar{r}^{\prime 2}_\lightning-2\zeta) \bar{r}^{\prime 4}_\lightning\stackrel{\backprime\backprime}{\omega}_\lightning+\bar{r}^{\prime 2}_\lightning \zeta^2 + 2 \zeta (\bar{r}'')^2 _\lightning\omega_* + (\bar{r}'')^2 _\lightning (\beta^2 + \bar{\chi}_*)\omega_*^2\right).
\end{eqnarray*}
The coefficients $a_0,\, a_1,\, a_0$ are explicitly
known rational expressions of $r_\lightning$. Now, as
$\stackrel{\backprime}{\bar{\chi}}>0$ for $\bar{r}\in [\alpha_2,r_*]$, it
follows from equation \eqref{IntermediateStep} that
\begin{equation}
a_2 \omega^2_\lightning + a_1 \omega_\lightning + a_0 >0, \qquad
r_\lightning\neq r_!.
\label{IntermediateStep2}
\end{equation}
A lengthy computation using the explicit expressions for $a_2$, $a_1$ and $a_0$ in terms of $\bar{r}$ shows that $a_2<0$ and $a_1^2-4
a_2a_0<0$ for $r_\lightning\in (\alpha_2,r_*)$ so that the polynomial
$f(x) =a_2 x^2 + a_1 x +a_0$ is negative for $x\in \mathbb{R}$. This
is a contradiction with \eqref{IntermediateStep2}. Thus, assuming that
$r_\lightning\neq r_*,\; r_!$, there is no value of $\bar{r}$ for which
$\stackrel{\backprime\backprime}{\omega}\geq
\stackrel{\backprime\backprime}{\omega}_*$. The possibility $r_\lightning=
r_!$ can be excluded by continuity.
\end{proof}

In what follows, we restrict our attention to curves such that
$r_*\in(\sqrt{\frac{3}{2}}m,r_\circledast)$ so that
$\stackrel{\backprime}{\omega}_*<0$ ---cf. equation
\eqref{InitialDDomega}. Consistent with Lemma  \ref{Lemma:SignDDomega}
it is assumed that  $\omega$ has a local maximum in $(r_+,r_*)$
---otherwise one directly has that $\omega_+>0$ and there is nothing
to prove. Denote by
$r_\wedge$ the location of such local extremum and write
$\omega_\wedge\equiv \omega(r_\wedge)$. A lower bound for
$\omega_\wedge$ can be obtained from the evaluation of equation
\eqref{Equationw:r} at $\bar{r}=r_!$. As $\bar{r}''_! =0$, using Lemma
\ref{Lemma:SignDDomega} one concludes
that
\[
\bar{r}^{\prime 2}_! \stackrel{\backprime\backprime}{\omega}_!= \zeta +(\beta^2 +\chi_!) \omega_! \leq
\bar{r}^{\prime 2}_! \stackrel{\backprime\backprime}{\omega}_*<0.
\]
It follows then that 
\begin{equation}
\frac{\bar{r}^{\prime 2}_!
  \stackrel{\backprime\backprime}{\omega}_*-\zeta}{\beta^2+\chi_!}
\leq \omega_!\leq \omega_\wedge.
\label{BoundMaximum:omega}
\end{equation}
It can be readily verified that $\beta^2 + \chi_!$ is negative so that
the lower bound of $\omega_\wedge$ given by the inequality
\eqref{BoundMaximum:omega} is positive. Important for the sequel is
the following observation: combining inequality
\eqref{BoundMaximum:omega} with the value for
$\stackrel{\backprime\backprime}{\omega}_*$ given in \eqref{InitialDDomega} it
follows that
\[
\omega_\wedge \rightarrow \infty \quad \mbox{ as } \quad
r_*\rightarrow r_\circledast.
\]

\medskip
The subsequent analysis will also require
of an upper bound for $r_\wedge$. Such a bound can be more easily
obtained by considering equation \eqref{enemy}. As already discussed,
for $r_*\in(\sqrt{\frac{3}{2}}m,r_\circledast)$ one has that
$\omega''_*>0$. As $\omega$ is assumed to have a maximum, it follows that there must
exist an inflexion point at which $\omega''=0$. At this inflexion
point equation \eqref{enemy} implies $-(\beta^2 + \bar{\chi})\omega
=\zeta$. This last equality, together with the observation that at this point
$\omega\geq \omega_*$, leads to
\begin{equation}
-\frac{\zeta}{\beta^2+\bar{\chi}}\geq \omega_*.
\label{InequalityLocationInflexion}
\end{equation}
The above inequality can be regarded as a condition on
$\bar{r}$ as $\zeta>0$ for the range of $r_*$ under consideration
---cf. \eqref{ZetaExtremal}. Notice, in particular, that $\beta^2+\bar{\chi}>0$ in
order for \eqref{InequalityLocationInflexion} to make sense as
$\omega_*> 1$. Some inspection shows that for $r_*\in(\sqrt{\frac{3}{2}}m,r_\circledast)$
\[
r_\wedge < r_\dagger\equiv \tfrac{13}{10}m. 
\]
A final observation concerning the maximum of $\omega$ is the
following: evaluating equation \eqref{Equationw:r} at the maximum one
obtains after some rearranging that
\[
\frac{\bar{r}^{\prime 2}_\wedge \stackrel{\backprime\backprime}{\omega}_\wedge
    -\zeta}{\beta^2 + \bar{\chi}_\wedge}=\omega_\wedge>0.
\]
However, $\bar{r}^{\prime 2}_\wedge \stackrel{\backprime\backprime}{\omega}_\wedge
    -\zeta <0$ so that necessarily
    $\beta^2+\bar{\chi}_\wedge <0$. 

\medskip
\noindent
\textbf{Estimating $\omega_+$.} In what follows, assume that
$r_*\in(\sqrt{\frac{3}{2}}m,r_\circledast)$. Moreover, suppose 
that $r_+<r_\wedge$ ---otherwise, as a consequence of Lemma
\ref{Lemma:SignDDomega} one has that $\omega_+\neq 0$ and the result
of Proposition \ref{Proposition:NonExistenceConjugatePointsExtremal} follows directly. In order to estimate
$\omega_+$ we exploit the information on the location and the size of
the maximum of $\omega$ acquired in the previous section. 

Clearly, $\omega>0$ for $\bar{r}$ sufficiently close to $r_\wedge$. We
make use of a bootstrap argument to show that the interval where
$\omega>0$ can be extended to include $r_+$. A
calculation similar to the one leading to equation
\eqref{FirstIntegral} yields
\begin{equation}
\bar{r}^{\prime 2} \stackrel{\backprime}{\omega}{}^2
=2\zeta(\omega-\omega_\wedge)+ (\beta^2 + \bar{\chi}) \omega^2
-(\beta^2 + \bar{\chi}_\wedge) \omega^2_\wedge +
\int^{r_\wedge}_{\bar{r}} \stackrel{\backprime}{\bar{\chi}} \omega^2
\mbox{d}s, \qquad \bar{r}\in [r_+,r_\wedge]
\label{FirstIntegralFromMaximum}
\end{equation}
where, in particular, it has been used that
$\stackrel{\backprime}{\omega}_\wedge=0$. One needs to estimate 
various terms in this expression. To this end it is noticed that $r_+ \leq \bar{r}\leq r_\wedge < r_\dagger$,
so that using formula
\eqref{ReducedSqrtRoots} it follows that
\begin{equation}
C^2_1
\stackrel{\backprime}{\omega}{}^2 \leq \bar{r}^{\prime 2}
\stackrel{\backprime}{\omega}{}^2\qquad \mbox{for} \qquad \bar{r}\in[r_+,r_\wedge],
\label{Estimate1}
\end{equation}
where
\[
C^2_1 \equiv \frac{\beta^2}{r_\dagger^2}
(r_+-\alpha_1)(r_+-\alpha_2)(r_*-r_\dagger)(\alpha_3-r_\dagger)>0.
\]
Moreover, it can be explicitly verified that
$\stackrel{\backprime\backprime}{\bar{\chi}}<0$ so that
\begin{equation}
\int^{r_\wedge}_{\bar{r}} \stackrel{\backprime}{\bar{\chi}} \omega^2
\mbox{d}s < \;\stackrel{\backprime}{\bar{\chi}}_+
\omega^2_\wedge(r_\wedge-r_+) \qquad \mbox{as long as} \qquad \omega> 0. 
\label{Estimate2}
\end{equation}
Finally, one has that
\begin{equation}
\beta^2 + \bar{\chi}<0 \qquad  \mbox{for} \qquad \bar{r}\in [r_+,r_\wedge].
\label{Estimate3}
\end{equation}
Making use of inequalities \eqref{Estimate1}-\eqref{Estimate3} in equation \eqref{FirstIntegralFromMaximum} one
concludes that
\[
C^2_1 \stackrel{\backprime}{\omega}{}^2 <
2\zeta(\omega-\omega_\wedge) -(\beta^2 + \bar{\chi}_\wedge)
\omega^2_\wedge + \stackrel{\backprime}{\bar{\chi}}_+
\omega^2_\wedge(r_\wedge-r_+), \quad \mbox{on} \quad
[\bar{r},r_\wedge] \quad \mbox{as long as} \quad \omega>0.
\]
For the convenience of the presentation let 
\[
C_2 \equiv  \stackrel{\backprime}{\bar{\chi}}_+ \omega^2_\wedge(r_\wedge-r_+) -(\beta^2 + \bar{\chi}_\wedge)
\omega^2_\wedge-2\zeta \omega_\wedge,
\]
so that 
\begin{equation}
0<C_1^2 \stackrel{\backprime}{\omega}{}^2 \leq 2\zeta \omega + C_2.
\label{BasicInequality}
\end{equation}
As $\stackrel{\backprime}{\omega}>0$ on $[r_+,r_\wedge]$,
one can consider the positive square root of inequality
\eqref{BasicInequality} and then integrate over $[\bar{r},r_\wedge]$ to obtain
\begin{equation}
\sqrt{2\zeta \omega_\wedge + C_2} +\frac{\zeta}{C_1}(\bar{r}-r_\wedge)
<  \sqrt{2\zeta \omega +C_2}.
\label{BasicBoundOmega}
\end{equation}
The second term of the left hand side of this last inequality is
negative as $\bar{r}<r_\wedge$. However, in view of the second equation in
\eqref{InitialDDomega}
and the bound \eqref{BoundMaximum:omega} it is possible to ensure that the left
hand side is positive by choosing $r_*$ sufficiently close to
$r_\circledast$ ---that is, there exists $r_\star \in
(\sqrt{\frac{3}{2}}m,r_\circledast)$ such that if $r_*\in
(r_\star,r_\circledast)$ then
\[
0<\sqrt{2\zeta \omega_\wedge + C_2} +\frac{\zeta}{C_1}(r_+-r_\wedge)
\leq \sqrt{2\zeta \omega_\wedge + C_2} +\frac{\zeta}{C_1}(\bar{r}-r_\wedge).
\]
Crucially, it can be verified that $C_1$ remains finite
and non-zero as $r_*\rightarrow r_\circledast$.  Squaring inequality \eqref{BasicBoundOmega}
and simplifying one obtains the lower bound
\[
\omega_\wedge + \frac{\zeta}{2C_1^2}(r_+-r_\wedge)^2 + \frac{1}{C_1}
(r_+-r_\wedge) \sqrt{2\zeta \omega_\wedge + C_2} < \omega.
\]
Using the definition of $C_2$, this inequality can be rewritten
as
\begin{equation}
C_3 \omega_\wedge + \frac{\zeta}{2C_1^2}(r_+-r_\wedge)^2 < \omega, 
\label{FinalInequality}
\end{equation}
where
\[
C_3 \equiv 1+ \frac{1}{C_1}(r_+-r_\wedge) \sqrt{\stackrel{\backprime}{\bar{\chi}}_+(r_\wedge-r_+)-(\beta^2+\bar{\chi}_\wedge)}.
\]
A lengthy direct computation using the information available about the various terms in
this expression shows that $C_3>0$ for $r_*\in (r_\star,r_\circledast)$. As $\zeta>0$ for the range of $r_*$ under
consideration it follows from inequality \eqref{FinalInequality} that
$\omega>C_4>0$ where $C_4$ is independent of
$\bar{r}\in[r_+,r_\circledast]$ ---at least for curves with $r_*$ close enough to
$r_\circledast$. In particular, one has that $\omega_+>0$. This proves Proposition
\ref{Proposition:NonExistenceConjugatePointsExtremal}.

\subsection{Some remarks}

In the present
discussion, no attempt has been made to analyse the behaviour of the
congruence after it crosses the horizon. However, numerical
evaluations of equation \eqref{enemy} show that the scalar $\omega$
goes to zero shortly after the curve has crossed the horizon, and
certainly, before it reaches the turning point $\bar{r}=\alpha_2$. In
any case, one knows there exists an open neighbourhood after the
horizon where the congruence remains non-degenerate.

\section{Conclusions}
\label{Section:Conclusions}

The analysis carried out in Sections
\ref{Section:ExplicitExpressions}, \ref{Section:AnalysisConformalCurves}
and \ref{Section:ConformalDeviationEquations} allows to provide the following technical version of our main Theorem:

\begin{theorem}
\label{MainTheoremTechnicalVersion}
Let $(\tilde{\mathcal{M}},\tilde{\bm g})$ denote a Reissner-Nordstr\"om
spacetime with $q^2\leq m^2$ and let $r_\circledast$ as defined by equation
\eqref{rCircledast}. On $(\tilde{\mathcal{M}},\tilde{\bm g})$ consider
the congruence of timelike conformal curves defined by the initial conditions
\eqref{CGInitialData} and $r_*\in(r_+,\infty)$ on the time symmetric slice of the domain of
outer communication. Let $\bar{\tau}$ and $\tau$ denote,
respectively, the physical and conformal proper
time of the curves of the congruence. For this congruence one has that:

\begin{itemize}

\item[(a)]  Each curve of this congruence
exists for $\bar{\tau}\in \mathbb{R}$. Moreover:
\begin{itemize}
\item[(i)] the curves with $r_*\in(r_\circledast,\infty)$ reach null infinity in
  an infinite amount of physical proper time but in a finite amount of conformal proper time; 
\item[(ii)] the curves with $r_*=r_\circledast$ reach past and future
  timelike infinity in an infinite amount of physical proper time but a finite amount of conformal proper time;
\item[(iii)] the curves  with $r_*\in(r_+,r_\circledast)$ reach the
  event horizon in a finite amount of both physical and conformal
  proper time.
\end{itemize}

\item[(b)] In addition one has that:

\begin{itemize}
\item [(i)] If $q^2 \leq \tfrac{8}{9}m^2$ then the congruence if free
  of conjugate points in the domain of outer communication.

\item[(ii)] In the extremal case $q^2=m^2$, there exists
$r_\star\in(r_+,r_\circledast)$ such that the subcongruence defined by
$r_*\in (r_\star,\infty)$ is free of conjugate points in the domain of
outer communication.

\end{itemize}

\end{itemize}

\end{theorem}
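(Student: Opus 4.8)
The plan is to assemble the theorem from the fibrewise analysis carried out in Sections \ref{Section:AnalysisConformalCurves} and \ref{Section:ConformalDeviationEquations}, partitioning the congruence according to the position of $r_*$ relative to the critical value $r_\circledast$. The entire behaviour of an individual curve is encoded in the reduced equation \eqref{ReducedEquation}, equivalently \eqref{ReducedSqrtRoots}, whose root structure is governed by Lemma \ref{Lemma:BehaviourRoots}; the conjugate-point question is in turn reduced, via the warped-product formulae, to the positivity of the single scalar $\omega$ solving \eqref{enemy}. So the work is really to combine previously established estimates into a single statement valid on the whole domain of outer communication.

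For part (a) I would first dispose of global existence. For $r_*>r_\circledast$ this is the content of the proposition at the end of Section \ref{Section:CGOutside}: the differential inequality $\bar{r}'<(2D_*/r_*)\bar{r}$ integrates to $\bar{r}\le r_* e^{2D_*\bar{\tau}/r_*}$, ruling out blow-up in finite $\bar{\tau}$, and an analogous bound controls $\bar{t}$ and $\bar{u}$. For $r_*\le r_\circledast$ the function $\bar{r}$ is trapped between its turning points $\alpha_2$ and $r_*$ by Lemma \ref{Lemma:BehaviourRoots}, so $\bar{r}$ and $\bar{r}'$ remain bounded and the solution persists for all $\bar{\tau}$; since equation \eqref{rprimeprime} is invariant under $\bar{\tau}\to-\bar{\tau}$, $\bar{r}'\to-\bar{r}'$ and the datum satisfies $\bar{r}'_*=0$, the solution is even in $\bar{\tau}$ and the conclusion extends to $\bar{\tau}\in\mathbb{R}$. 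The asymptotic statements are then read off directly: (i) from the lower-fence estimate of Section \ref{Section:CGOutside}, which forces $\bar{r}\to\infty$ as $\bar{\tau}\to\infty$, while \eqref{ConformalFactorReduced} shows this occurs at the finite conformal time $\tau_{\mathscr{I}^+}=2\Theta_*/\beta$; (ii) from Section \ref{Section:CriticalCurve}, where $\bar{r}\equiv r_\circledast$ and \eqref{SeparationCurvePhysicalParameter} give $\bar{t},\bar{u},\bar{v}\to\infty$, reaching $i^\pm$ at the finite $\tau_{i^+}=2\Theta_\circledast/\beta_\circledast$; and (iii) from the elliptic-integral estimate of Section \ref{ExplicitExpressions}, where finiteness of $I(\alpha_2)$ shows the horizon $\bar{r}=r_+$ is reached in finite physical proper time, hence finite conformal proper time by the bounded relation \eqref{UnphysicalToPhysicalProperTime}.

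For part (b) I would again split by $r_*$ and combine the three deviation analyses into a single positivity statement over the domain of outer communication. The curves with $r_*>r_\circledast$ are handled by Section \ref{Subsection:DeviationOutside}: the representation \eqref{Solutionw}, together with $\beta^2+\bar{\chi}>0$, the comparison $\varpi\ge\cosh(\beta\bar{\tau})$ and the bound $0\le 1-\varsigma/\varpi\le 2/\beta^2$, yields $\Theta\omega\ge(\Theta_*/\varrho_*)(m^2-q^2+4\varrho_* m)>0$, uniformly up to $\mathscr{I}^+$. The critical curve $r_*=r_\circledast$ is covered by the closed-form solution of Section \ref{Section:DeviationCriticalCurve}, giving $\Theta\omega\ge\Theta_\circledast\omega_\circledast>0$ up to and including $i^+$. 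The remaining curves $r_*\in(r_+,r_\circledast)$ are exactly Propositions \ref{Proposition:NonExistenceConjugatePoints} and \ref{Proposition:NonExistenceConjugatePointsExtremal}: the former settles the regime $q^2\le\tfrac{8}{9}m^2$, where $\bar{\chi}\ge 0$ on $[r_+,r_*]$ so the Section \ref{Subsection:DeviationOutside} argument applies verbatim on a finite interval, giving (b)(i); the latter settles the extremal regime for $r_*\in(r_\star,r_\circledast)$, giving (b)(ii).

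The hard part is unquestionably the extremal subcase $r_*\in(r_+,r_\circledast)$ underlying (b)(ii), that is, Proposition \ref{Proposition:NonExistenceConjugatePointsExtremal}. Here $\bar{\chi}<0$ throughout $[\alpha_2,r_*]$, so the convexity comparison that worked outside fails and one cannot bound $\omega$ below by an explicit hyperbolic profile. My strategy follows Section \ref{Section:DifficultCase}: reparametrise \eqref{enemy} by $\bar{r}$ to obtain \eqref{Equationw:r}, establish the sign control $\stackrel{\backprime\backprime}{\omega}<\stackrel{\backprime\backprime}{\omega}_*<0$ through the first-integral contradiction of Lemma \ref{Lemma:SignDDomega} (which reduces the obstruction to showing the quadratic $a_2x^2+a_1x+a_0$ is everywhere negative, i.e. $a_2<0$ and $a_1^2-4a_2a_0<0$), then locate and lower-bound the interior maximum $\omega_\wedge$ via \eqref{BoundMaximum:omega}, and finally run the bootstrap of \eqref{FirstIntegralFromMaximum}--\eqref{FinalInequality} to propagate positivity down to $r_+$. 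The delicate point, and the reason the result is stated with a threshold $r_\star$ rather than on all of $(r_+,r_\circledast)$, is that these estimates only close when $r_*$ is near $r_\circledast$: there $\omega_\wedge\to\infty$ while the competing terms stay controlled, so the sign of $C_3$ and the positivity of the left-hand side of \eqref{BasicBoundOmega} can be secured only in that regime.
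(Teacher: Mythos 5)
Your proposal is correct and follows essentially the same route as the paper: the theorem is precisely an assembly of the curve-by-curve analysis of Sections \ref{Section:AnalysisConformalCurves} and \ref{Section:ConformalDeviationEquations}, partitioned by the position of $r_*$ relative to $r_\circledast$, with part (b) resting on the deviation-equation estimates and Propositions \ref{Proposition:NonExistenceConjugatePoints} and \ref{Proposition:NonExistenceConjugatePointsExtremal}. Your explicit time-reflection argument for extending existence from $\bar{\tau}\geq 0$ to all of $\mathbb{R}$ is a small but welcome addition that the paper leaves implicit.
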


As already indicated in the main text, numerical evaluations of the
congruence suggest that it should be possible to improve Theorem
\ref{MainTheoremTechnicalVersion} so as to ensure that the congruence of conformal
curves is free of conjugate points in the domain of outer
communication for $q^2\leq m^2$. 

\medskip
The analysis of this article a first step in the study of the
Reissner-Nordstr\"om spacetime as a solution of the conformal field
equations. In
view of this programme, the results of Section
\ref{Section:TimelikeInfinity} are specially relevant as they suggests
that the conformal structure of the timelike infinity, $i^+$, of the
extremal Reissner-Nordstr\"om spacetime may be more tractable, from an
analytic point of view, than that of the non-extremal case.

Regarding the Reissner-Nordstr\"om spacetime as a spherically
symmetric model of the Kerr spacetime, it is natural to wonder how
much of the structure observed in the present analysis has an analogue
in the Kerr solution. For example, it is natural to conjecture that
the domain of outer communication of the Kerr spacetime can be covered
by means of a non-singular congruence of conformal
geodesics reaching beyond null infinity. It is very likely that this
congruence will degenerate after it has crossed the event horizon and
that the curves will have some type of singularity avoiding properties
so that there may exists regions in the black hole region which can not
be probed in this way. A more tantalising possibility is that, as in
the case of the extremal Reissner-Nordstr\"om spacetime, the extreme
Kerr may have a more tractable structure at $i^+$. In any case, the
analysis of conformal geodesics in the Kerr spacetime is bound to be
much more complicated as  the warped product structure of the
line element is lost.

\section*{Acknowledgements}
We have profited from discussions with H. Friedrich, S. Dain and
J.M. Heinzle. We also thank A. Garcia-Parrado, D. Pugliese and
A. Carrasco for useful comments and observations. CL acknowledges
funding from the project grant FCT/CERN/FP/123609/2011. JAVK thanks
the hospitality of the Erwin Schr\"odinger Institute for Mathematical
Physics in Vienna during the programme "Dynamics of General
Relativity: black holes and asymptotics" in December 2012 when the
last stages of this research were carried out.

\appendix

\section{Conformal geodesics in the Schwarzschild spacetime}
\label{Appendix:Schwarzschild}

For completeness, we include a study of the
solutions to the conformal curve equations in the case of the
Schwarzschild spacetime (where $q=0$). In this case the conformal
curves are, in fact, conformal geodesics. The analysis of these curves
was originally done in \cite{Fri03c} using explicit solutions in terms
of elliptic functions. The discussion given here follows the strategy
of section \ref{Section:AnalysisConformalCurves} in the main text, and
avoids the use of explicit solutions.

\medskip
As in the case of the main text, essential for our analysis is the
factorisation of the polynomial appearing in equation
\eqref{ReducedSqrt}. If $q=0$,  then $P(\bar{r})$ is of degree $3$ and
one has the factorisation
\begin{equation}
P(\bar{r}) =\beta^2 (\bar{r}-r_*)(\bar{r}-\alpha)(\bar{r}+\alpha),
\label{FactorisationSchwarzschild}
\end{equation}
where 
\[
\alpha \equiv \sqrt{\frac{m r_*}{2D_*}}.
\]
The constant solution to equation \eqref{ReducedSqrt} can be found to
be given by $\bar{r}=r_\circledast=\tfrac{5}{2}m$. As discussed in
\cite{Fri03c}, these curves reach the point $i^+$ in a finite amount
of unphysical proper time, and divide the two possible regimes for the
conformal geodesics. 

\subsection{Conformal geodesics with $r_*>r_\circledast$}
If $r_\circledast<r_*$, then the analysis of the curves is covered by
the discussion in section \ref{Section:CGOutside} by setting
$q=0$. These conformal geodesics reach null infinity.

\subsection{Conformal geodesics with $r_*<r_\circledast$}
If $r_*<r_\circledast$, then one can readily verify that
$\bar{r}''_*<0$, so that $\bar{\tau}=0$ is a maximum of the function
$\bar{r}$ as one has that $\bar{r}_*'=0$. Thus, $\bar{r}$ is initially decreasing. A
computation shows that the following chain of inequalities hold:
\begin{equation}
\label{Inequalities:rbar}
-\alpha<0<2m< r_*<\alpha \leq r_\circledast.
\end{equation}
 Thus, the curve must reach the
singularity ($r=0$) before it can reach the turning point at
$\bar{r}=-\alpha$. It only remains to see whether the conformal
geodesic reaches the singularity in finite amount of proper time. 

\medskip
 A computation using the factorisation
\eqref{FactorisationSchwarzschild} shows that $\bar{r}''=0$
implies the condition
\[
\bar{r}(\bar{r}-\alpha)(\bar{r}+\alpha)+ \bar{r}
(\bar{r}-r_*)(\bar{r}+\alpha) +
\bar{r}(\bar{r}-r_*)(\bar{r}-\alpha) = 2 (\bar{r}-r_*)(\bar{r}-\alpha)(\bar{r}+\alpha).
\]
where it has been assumed that $\bar{r}\neq 0$. A further
rearrangement yields
\begin{equation}
2\bar{r}^2(\bar{r}-r_*) =(\bar{r}^2-\alpha^2)(\bar{r}-2r_*).
\label{SchwarzschildConditionInflexion}
\end{equation}
Using the chain of inequalities in \eqref{Inequalities:rbar} one concludes that
\[
(\bar{r}^2 -\alpha^2)<0, \quad \bar{r}-2r_*<0, \quad \bar{r}-r_*<0.
\]
Thus, the left hand side of condition
\eqref{SchwarzschildConditionInflexion} is negative, while the right
hand side is positive. This shows that there are no values of
$\bar{r}<r_\circledast$ for which $\bar{r}''=0$. Hence, one concludes
that the function $\bar{r}$ reaches the value $\bar{r}=0$ in a finite
value of $\bar{\tau}$ ---that is the conformal curves under
consideration hit the singularity in a finite amount of proper
time. Moreover, it is noticed that $\bar{r}'\rightarrow \infty$ as
$\bar{r}\rightarrow 0$ ---cf. equation \eqref{enemy}.

\medskip
Finally, the conformal geodesic starting at the bifurcation sphere
($r_*=2m$) is covered by the analysis of Section
\ref{Section:CGBifurcationSphere}, by setting $q=0$. One finds the
explicit solution
\[
\bar{r}=m(1+\cos \bar{\tau}).
\]
This conformal geodesic reaches the singularity at $\bar{\tau}=\pi$.


\end{document}